\definecolor{fillvertices}{RGB}{250,240,230}
\definecolor{cornflowerblue}{RGB}{100,149,237}
\newcommand{\commentaire}
\newcommand{\se}{\subseteq}
\newcommand{\ls}{\leqslant}
\newcommand{\gs}{\geqslant}
\newcommand{\sm}{\setminus}
\newcommand{\cutrk}{\textup{cutrk}}
\newcommand{\valuescale}{0.81}
\newtheoremstyle{mythmstyle} 
  {6pt}   
  {0pt}   
  {\itshape}  
  {}      
  {\bfseries} 
  {.}     
  {0.5em} 
  {}      
\theoremstyle{mythmstyle}
\newtheorem{theorem}{Theorem}
\newtheorem*{theorem*}{Theorem}
\newtheorem{definition}{Definition}
\newtheorem{lemma}{Lemma}
\newtheorem{proposition}{Proposition}
\newtheorem{corollary}{Corollary}
\newcommand{\NN}{\mathbb{N}}
\newcommand{\calM}{{\cal M }}
\begin{document}

\title{Covering a Graph with Minimal Local Sets	}
\author{Nathan Claudet}
\affiliation{Inria Mocqua, LORIA, CNRS, Universit\'e de Lorraine,  F-54000 Nancy, France}
\author{ Simon Perdrix}
\affiliation{Inria Mocqua, LORIA, CNRS, Universit\'e de Lorraine,  F-54000 Nancy, France}

\date{}
\maketitle

\begin{abstract}
    Local sets, a graph structure invariant  under local complementation,  have been originally introduced in the context of quantum computing for the study of quantum entanglement within the so-called graph state formalism. A local set in a graph is made of a non-empty set of vertices together with its odd neighborhood. We show that any graph can be covered by minimal local sets, i.e. that every vertex is contained in at least one local set that is minimal by inclusion. More precisely, we introduce an algorithm for finding a minimal local set cover in polynomial time. This result is proved by exploring the link between local sets and cut-rank. We prove some additional results on minimal local sets: we give tight bounds on their size, and we show that there can be exponentially many of them in a graph. Finally, we provide an extension of our definitions and our main result to $q$-multigraphs, the graphical counterpart of quantum qudit graph states.
\end{abstract}

\section{Introduction} 

\noindent{\bf Context and contributions.} Local complementation, introduced by  Kotzig \cite{Kotzig68}, is a  transformation that consists in complementing the neighborhood of a given vertex of a graph. Local complementation is strongly related to the notion of vertex minor \cite{OUM200579,kim2023vertex,DHW:howtotransform},  has various applications in the study particular families of graphs including circle graphs \cite{DEFRAYSSEIX198129,BOUCHET1994107}, and plays a central role in the study of graph parameters like rank-width \cite{OUM201715} and cut-rank \cite{Bouchet1989,OUM2006}. Local complementation finds significant applications in quantum computing, ranging from addressing fundamental questions on quantum entanglement \cite{VandenNest04,Hein06} to its utilization in various protocols \cite{Markham08,marin2013,claudet2023small,cautres2024vertexminor}, and computational models like ZX-calculus \cite{Coecke_2011,DP09,DP14,gong2017equivalence,DKPW} or measurement-based quantum computation \cite{mhalla2014graph,danos2009extended,mhalla:hal-00934104}.

We consider in this paper a structure invariant under local complementation called \emph{local set}  \cite{Perdrix06}.  Given a simple, undirected graph $G$, a non-empty set of vertices is said \emph{local} if it is of the form $D \cup Odd_G(D)$, where $Odd_G(D)$ is the set of vertices that share an odd number of edges with vertices of $D$. Minimal local sets are local sets that are minimal by inclusion.

Local sets are related to the local minimum degree $\delta_{loc}(G)$ of a graph $G$ which is  the minimum degree of its vertices, up to local complementation\footnote{i.e. $\delta_{loc}(G)=\min_{H \equiv G}\delta(H)$ where $H\equiv G$ if $H$ can be transformed in $G$ by a sequence of local complementations.}. It has been proved in \cite{Perdrix06} that the local minimum degree of a graph is the size of its smallest (minimal) local set minus one. The local minimum degree can also be characterized using the cut-rank function, which associates with any cut of a graph the rank of the matrix describing the edges of the cut.

Foliage and foliage partition \cite{Pappa2022,burchardt2023foliage,zhang2023bell} have been recently introduced: the foliage of a graph can be defined as the union of the size-2 minimal local sets;  and two vertices are in the same component of the foliage partition if they belong to a same minimal local set of size two. Notice that minimal local sets of size two are composed of two non-isolated vertices, which can be either a leaf and its neighbor or twins.  
In \cite{burchardt2023foliage}, the question of the extension of foliation is raised as  size-2 local sets do not cover all the vertices of a graph in general. One can naturally relax the size restriction, and hence consider minimal-local-set covers -- MLS covers for short. One of our main result is to show that any vertex can be covered by a minimal local set, and thus that any graph has an MLS cover. This result relies on our cut-rank-based characterisation of minimal local sets together with the peculiar properties of the cut-rank which, as a set function, is symmetric, linearly bounded and submodular.  We also introduce an efficient algorithm that, given a graph,  produces a family of  a minimal local sets that cover any vertex of the graph. Furthermore, we investigate properties related to the size and quantity of minimal local sets in a graph, showing in particular that some graphs have an exponential number of minimal local sets.

\smallskip

\noindent \textbf{Related work on  quantum graph states. } 
Graph states are resources for quantum computing, that are in one-to-one correspondence with mathematical graphs. The survey \cite{Hein06} by Hein et al. provides an excellent introduction to graph states.
Local sets were introduced in \cite{Perdrix06} as a combinatorial tool in the context of the preparation of graph states.

Within  the graph state formalism, two locally equivalent graphs\footnote{i.e. equal up to local complementation.} represent the same entanglement\footnote{More precisely the corresponding graph states are equal up to local unitaries.}, and two graphs representing the same entanglement have the same cut-rank functions. Notice that the converses of these two assertions are false\footnote{Two graphs representing the same entanglement does not generally imply that they are locally equivalent: a counterexample of order 27 has been discovered using computer assisted methods \cite{Ji07}. Moreover, two graphs having the same cut-rank function does not generally imply that they represent the same entanglement: one counter-example involves two isomorphic Petersen graphs \cite{Fon-Der-Flaass1996,Hein06}.}, and little is known on non locally-equivalent graphs that represent the same entanglement except that their  minimal local sets must satisfy some strong constraints \cite{VandenNest05, Zeng07}. 

The local minimum degree is strongly related to the notion of $k$-uniformity \cite{Raissi2022,Raissi2020,Klobus2019,Goyeneche2018,Goyeneche2014,Scott2004}, which is defined as follows: a quantum state is $k$-uniform if for any subset of $k$ qubits, the corresponding reduced state is maximally mixed, i.e. roughly speaking contains no information. Notice that a graph state is $k$-uniform if and only if the minimal local sets of the corresponding graph contain at least $k+1$ vertices\footnote{Local sets are nothing but the support of the stabilizers of the state, the reduced state on $k$ qubits \cite{Hein06} is a maximally mixed state if it does not contain any local set. }, i.e. its  local minimum degree is $k$ or more. The $\lfloor n/2 \rfloor$-uniform $n$-qubit states are called absolutely maximally entangled \cite{Helwig2013,Helwig2013existence,Helwig2012,Huber2018,Goyeneche2015,Facchi2008,Arnaud2013}. A graph state is absolutely maximally entangled if and only if the (minimal) local sets of the corresponding graph are of size at least $\lfloor n/2 \rfloor +1$. Graphs that satisfy this property have been classified, and exist only for $n = 2,3,5,6$ \cite{Huber2017,Scott2004}, there exists however an infinite family of graphs whose minimal local sets are of size at least linear in their order, with constant 0.189 \cite{Javelle12}.

\smallskip 
\noindent{\bf Structure of the paper.} First, we define local sets and minimal local sets on a graph, and provide some examples, in \cref{sec:definitions}. Then we show how they can be defined alternatively using the cut-rank function, and prove additional results on the size and number of minimal local sets. Namely, we provide a tight bound on the size of minimal local sets, and show that a lower bound on the size of the minimal local sets implies a lower bound on their number. In \cref{sec:MLS_cover}, we prove the main result of this paper. For any graph, every vertex is contained in at least one minimal local set. We then give a polynomial-time algorithm that emerges from the proof of our main result, that finds a family of minimal local sets that cover all vertices of the graph. Finally, in \cref{sec:extension}, we extend the notion of local sets to $q$-multigraphs, the graphical counterpart of quantum qudit graph states, and show that our main result extends to any prime dimension. 

\section{Minimal local sets}
\label{sec:definitions}

This work focuses on the notion on minimal local sets. We give their definition and some basic properties in \cref{sec:basic_def}, then give an alternative definition using the cut-rank function in \cref{sec:alt_def}. We study the size and number of minimal local sets in \cref{subsec:additional}.

\subsection{Preliminaries}
\label{sec:basic_def}

Let us first give some notations and basic definitions. A graph $G$ is a pair $(V,E)$, where $V$ is the set of vertices, and $E\se V^2$ is the set of edges. Here we only consider graphs that are undirected (if $(u,v) \in E,~(v,u) \in E$) and simple ($\forall u \in V, (u,u) \notin E$). A cut is a  bipartition of the vertices of the graph; with a slight abuse of notation we use $A$ to denote the cut $\{A,V\setminus A\}$. The set $N_G(u) = \{v ~|~ (u,v) \in E\}$ is the neighborhood of $u$. For any $D \se V$, $Odd_G(D) = \Delta_{u\in D} N_G(u) = \{v \in V ~|~|N_G(v) \cap D| = 1 \text{ mod } 2\}$ is the odd neighborhood of $D$, where $\Delta$ denotes the symmetric difference on vertices. Informally, $Odd_G(D)$ is the set of vertices that are the neighbors of an odd number of vertices in $D$. A local complementation according to a given vertex $u$ consists in complementing the subgraph induced by the neighborhood of $u$, leading to the graph   $G\star u= G\Delta K_{N_G(u)}$ where $\Delta$ denotes the symmetric difference on edges and $K_A$ is the complete graph on the vertices of $A$. Two graphs are said locally equivalent  if there are related by a sequence of local complementations. 

\begin{definition}[Local set]Given $G=(V,E)$, a \emph{local set} $L$ is a non-empty subset of $V$ of the form $L = D \cup Odd_G(D)$ for some $D \se V$ called a \emph{generator}.
\end{definition}

Local sets are invariant under local complementation \!-\! hence their name:\! if $L$ is a local set in a graph, so is in any locally equivalent graph,\! but possibly with a distinct generator\! \cite{Perdrix06}.

\begin{definition}[Minimal local set]
    A minimal local set is a local set that is minimal by inclusion. 
\end{definition}

Minimal local sets satisfy the following property: under local complementation, any vertex of a minimal local set can be made its generator.

\begin{proposition}[\cite{Perdrix06}]
    For any minimal local set $L$ defined on a graph $G$, for any $x\in L$, there exists $G'$ locally equivalent  to $G$ such that $L = \{x\} \cup N_{G'}(x)$. 
\end{proposition}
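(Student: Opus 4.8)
The plan is to prove the statement in two stages. First I would show that, up to local complementation, $L$ is the full neighbourhood of \emph{some} vertex $z\in L$ (i.e. $L=\{z\}\cup N_{G_0}(z)$ in some $G_0$ locally equivalent to $G$); then I would move this centre onto the prescribed vertex $x$ by a pivot along the edge $zx$. Throughout I rely on the fact, recalled above, that $L$ remains a local set in every locally equivalent graph, although its generator may change.

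The engine of the first stage is a rule describing how a generator evolves under a single local complementation, which I would establish by an elementary parity computation. Using that $G\star u$ only toggles edges inside $N_G(u)$, so that $N_{G\star u}(w)=N_G(w)\triangle(N_G(u)\setminus\{w\})$ for $w\in N_G(u)$, one checks: if $D$ generates $L$ in $G$ and $u\notin Odd_G(D)$ (so $|N_G(u)\cap D|$ is even), then $D$ still generates $L$ in $G\star u$; whereas if $u\in D\cap Odd_G(D)$, then $D\setminus\{u\}$ generates $L$ in $G\star u$. The first claim follows because for $w\notin D$ the parity of $|N_{G\star u}(w)\cap D|$ is unchanged, and the second is the complementary computation. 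The upshot is that a generator can always be shrunk by one whenever it meets its own odd neighbourhood.

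For the first stage I would then choose, over \emph{all} graphs locally equivalent to $G$ and all generators of $L$ in them, a generator $D_0$ (in a graph $G_0$) of \emph{minimum} size, and argue that $|D_0|=1$. Suppose $|D_0|\ge 2$. Minimality of $|D_0|$ forbids any reduction, so $D_0\cap Odd_{G_0}(D_0)=\emptyset$; moreover, for every $u\notin Odd_{G_0}(D_0)$ the set $D_0$ still generates $L$ in $G_0\star u$, and the parity computation (now with the correction term for $w\in D_0$) shows that $D_0\cap Odd_{G_0\star u}(D_0)=N_{G_0}(u)\cap D_0$. Since a nonempty right-hand side would let us reduce $D_0$ inside $G_0\star u$, minimality forces $N_{G_0}(u)\cap D_0=\emptyset$ for every such $u$. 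Equivalently, $D_0$ is independent and every neighbour of $D_0$ lies in $Odd_{G_0}(D_0)$. Picking $a\in D_0$, the set $\{a\}\cup N_{G_0}(a)$ is then a local set contained in $L$ that omits any other element $a'\in D_0$; this is a proper nonempty local subset of $L$, contradicting the minimality of $L$. Hence $|D_0|=1$, and $L=\{z\}\cup N_{G_0}(z)$.

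For the second stage, assume $x\neq z$ (otherwise we are done). Then $x\in N_{G_0}(z)$, so $zx$ is an edge, and I would apply the pivot $G_0\star z\star x\star z$ along it. Pivoting along an edge exchanges the neighbourhoods of its two endpoints (up to the endpoints themselves); since in $G_0$ the vertex $z$ is adjacent to exactly $L\setminus\{z\}$ and to nothing outside $L$, after the pivot $x$ becomes adjacent to exactly $L\setminus\{x\}$, giving $L=\{x\}\cup N_{G'}(x)$ in the resulting graph $G'\equiv G$. I expect the main obstacle to lie in the first stage, precisely in converting the minimum-generator choice into the clean structural statement that $N_{G_0}(u)\cap D_0=\emptyset$ for every $u\notin Odd_{G_0}(D_0)$: this is where the transformation rule and the minimality of $L$ must be combined carefully, and where the parity bookkeeping is most delicate. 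By contrast, the pivot step of the second stage is routine.
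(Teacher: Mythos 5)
Your proof is correct. Note that the paper offers no proof of this proposition---it is imported verbatim from the cited reference---so there is no internal argument to compare against; I have therefore checked your proof on its own merits. The two transformation rules you derive are the standard ones and the parity bookkeeping is right: for $u\notin Odd_G(D)$ one gets $Odd_{G\star u}(D)=Odd_G(D)\,\Delta\,(N_G(u)\cap D)$, so $D$ still generates $L$; for $u\in D\cap Odd_G(D)$ the set $D\setminus\{u\}$ generates $L$ in $G\star u$. The minimum-generator argument then goes through: minimality forces $D_0\cap Odd_{G_0}(D_0)=\emptyset$, the identity $D_0\cap Odd_{G_0\star u}(D_0)=N_{G_0}(u)\cap D_0$ for $u\notin Odd_{G_0}(D_0)$ forces $D_0$ to be independent with all its neighbours in $Odd_{G_0}(D_0)$, and then $\{a\}\cup N_{G_0}(a)$ is a nonempty proper local subset of $L$, contradicting minimality of $L$ whenever $|D_0|\ge 2$. (One presentational point: the ``reduction'' you invoke actually lands in $G_0\star u\star v$, a second local complementation; this is harmless precisely because you minimized $|D_0|$ over \emph{all} locally equivalent graphs, but it is worth saying explicitly.) The pivot step is also fine, since $N_{G_0\star z\star x\star z}(x)=(N_{G_0}(z)\setminus\{x\})\cup\{z\}=L\setminus\{x\}$; if you prefer to avoid quoting the neighbourhood-swap property of pivots, you can instead note that $x\in Odd_{G_0}(\{z\})$, so $\{z,x\}$ generates $L$ in $G_0\star x$, where $z$ lies in the odd neighbourhood of the generator and can be removed by one further local complementation at $z$, giving $L=\{x\}\cup N_{G'}(x)$ with $G'=G_0\star x\star z$.
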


We are interested in families of minimal local sets such that each vertex is contained in at least one minimal local set. We call such a family an \emph{MLS cover}.

\begin{definition}[MLS cover]
    Given a graph $G=(V,E)$, $\mathcal L\se 2^V$ is an MLS cover if
    \begin{itemize}[topsep=0.2em,itemsep=0.2em]
        \item $\forall L\in \mathcal L$, $L$ is a minimal local set of $G$,
        \item $\forall u\in V$, $\exists L \in \mathcal L$ such that $u\in L$.
    \end{itemize}
\end{definition}

Local sets, minimal local sets, and MLS covers are illustrated in \cref{fig:MLS}.

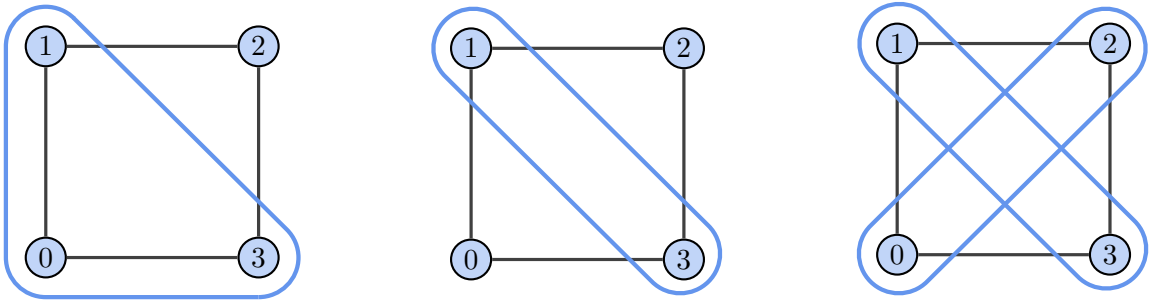
\begin{figure}[h]
    \centering
    
    \scalebox{1}{
    \begin{tikzpicture}[scale = 0.7]    
        \begin{scope}[every node/.style={circle,minimum size=15pt,thick,draw,fill=cornflowerblue!40, inner sep = 0pt}]
            \node (U0) at (0,0) {$0$};
            \node (U1) at (0,4) {$1$};
            \node (U2) at (4,4) {$2$};
            \node (U3) at (4,0) {$3$};
        \end{scope}
        \begin{scope}[every node/.style={},
                        every edge/.style={draw=darkgray,very thick}]              
            \path [-] (U0) edge node {} (U1);
            \path [-] (U1) edge node {} (U2);
            \path [-] (U2) edge node {} (U3);
            \path [-] (U3) edge node {} (U0);
        \end{scope}
        \draw[cornflowerblue, ultra thick] (4,-0.75) -- (0,-0.75) arc(-90:-180:0.75)-- (-0.75,0) -- (-0.75,4) arc(180:45:0.75);
        \draw[cornflowerblue, ultra thick] (4,-0.75)  arc(-90:45:0.75) -- (0.52,4.54);
    \end{tikzpicture}\qquad\qquad\raisebox{0cm}{
        \begin{tikzpicture}[scale = 0.7]
        
        \begin{scope}[every node/.style={circle,minimum size=15pt,thick,draw,fill=cornflowerblue!40, inner sep = 0pt}]
            \node (U0) at (0,0) {$0$};
            \node (U1) at (0,4) {$1$};
            \node (U2) at (4,4) {$2$};
            \node (U3) at (4,0) {$3$}; 
        \end{scope}
        
        \begin{scope}[every node/.style={},
                        every edge/.style={draw=darkgray,very thick}]              
            \path [-] (U0) edge node {} (U1);
            \path [-] (U1) edge node {} (U2);
            \path [-] (U2) edge node {} (U3);
            \path [-] (U3) edge node {} (U0);        
        \end{scope}
        \begin{scope}[shift={(3.4,-0.42)},rotate=45]
            \draw[cornflowerblue, ultra thick] (0,0) -- (0,5.5) arc(180:0:0.75) -- (1.5,0) arc(0:-180:0.75);
        \end{scope}
    \end{tikzpicture}}\qquad\qquad\raisebox{0cm}{
        \begin{tikzpicture}[scale = 0.7]
        
        \begin{scope}[every node/.style={circle,minimum size=15pt,thick,draw,fill=cornflowerblue!40, inner sep = 0pt}]
            \node (U0) at (0,0) {$0$};
            \node (U1) at (0,4) {$1$};
            \node (U2) at (4,4) {$2$};
            \node (U3) at (4,0) {$3$};  
        \end{scope}
        
        \begin{scope}[every node/.style={},
                        every edge/.style={draw=darkgray,very thick}]              
            \path [-] (U0) edge node {} (U1);
            \path [-] (U1) edge node {} (U2);
            \path [-] (U2) edge node {} (U3);
            \path [-] (U3) edge node {} (U0);        
        \end{scope}
        \begin{scope}[shift={(3.4,-0.42)},rotate=45]
            \draw[cornflowerblue, ultra thick] (0,0) -- (0,5.5) arc(180:0:0.75) -- (1.5,0) arc(0:-180:0.75);
        \end{scope}
        \begin{scope}[shift={(-0.5,0.55)},rotate=-45]
            \draw[cornflowerblue, ultra thick] (0,0) -- (0,5.5) arc(180:0:0.75) -- (1.5,0) arc(0:-180:0.75);
        \end{scope}
    \end{tikzpicture}}
    }
    
    \caption{(Left) A local set generated by $D = \{0\}$: $Odd_G(D) = \{1,3\}$. This is not a minimal local set. (Middle) A local set generated by $D = \{1,3\}$: $Odd_G(D) = \emptyset$. In particular, this is a minimal local set, as neither $\{1\}$, $\{3\}$ nor $\emptyset$ is a local set. (Right) an MLS cover of the graph, i.e. a set of minimal local sets such that each vertex is contained in at least one of them.}
    \label{fig:MLS}
    \end{figure}

The main goal of this paper is to prove that any graph has an MLS Cover, i.e. any vertex is contained in at least one minimal local set, and to provide an efficient algorithm to compute an MLS cover, given any graph. This is done in \cref{sec:MLS_cover}. First, for this purpose, we exhibit the links between (minimal) local sets and the cut-rank function.

\subsection{Links with the cut-rank function}
\label{sec:alt_def}

Given a graph $G$ and a cut $A$, one can define the map $\lambda_A : 2^{A}\to 2^{V\setminus A} = D\mapsto Odd_G(D)\setminus A$, which is linear with respect to the symmetric difference\footnote{$\forall D,D'\subseteq A, \lambda_G(D\Delta D') = \lambda_A(D)\Delta\lambda_A(D')$.}. Notice that $D$ is in the kernel of $\lambda_A$, i.e. $\lambda_A(D)=\emptyset$, if and only if $Odd_G(D)\subseteq A$, that is $D$ is a generator of a local set included in $A$. The rank of $\lambda_A$ is nothing but the so-called cut-rank of $A$. The cut-rank function, introduced by Bouchet under the name "connectivity function" \cite{Bouchet1993,Bouchet1987}, and coined "cut-rank" by Oum and Seymour \cite{OUM2006},  is usually defined as follows:

\begin{definition}[Cut-rank function]
    For $A \se V$, let the cut-matrix $\Gamma_A = ((\Gamma_A)_{ab}: a \in A\text{, } b \in V\sm A)$ be the matrix with coefficients in $\mathbb{F}_2$ (the finite field of order 2) such that $\Gamma_{ab} = 1$ if and only if $(a, b) \in E$. The cut-rank function of $G$ is defined as
    \begin{align*}
        \cutrk\colon 2^V & \longrightarrow \mathbb{N}\\
        A &\longmapsto \textbf{rank}_{\mathbb{F}_2}(\Gamma_A)
    \end{align*}
\end{definition}

A set $A$ is said \emph{full cut-rank} when $\cutrk(A)=|A|$. Minimal local sets can alternatively be defined using uniquely the cut-rank function. 

\begin{proposition}
    \label{prop:characMLS}Given a graph $G=(V,E)$ and $A\se V$, 
    \begin{itemize}[topsep=0.2em,itemsep=0.2em]
\item  If $A$ is a local set, then\footnote{This is not an equivalence. For example, in the complete graph of order 4 $K_4$, any set of vertices of size 3 is not a local set, however, such a set has a cut-rank of 1, and every set of size 1 or 2 also has a cut-rank of 1.} $\forall a \in A, \cutrk(A) \ls \cutrk(A \setminus \{a\})$,
\item     $A$ is a minimal local set if and only if $A$ is not full cut-rank, but each of its proper subset is, i.e.~$\forall a \in A, \cutrk(A) \ls \cutrk(A \setminus \{a\}) = |A|-1$.  
\end{itemize}
\end{proposition}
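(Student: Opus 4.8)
The plan is to run everything through the linear map $\lambda_A$ and the rank--nullity theorem. Since $\cutrk(A)$ is by construction the rank of $\lambda_A$ and the domain $2^A$ has dimension $|A|$ over $\FF_2$, rank--nullity gives $\dim\ker\lambda_A = |A| - \cutrk(A)$. The dictionary I would set up first is that the non-empty elements of $\ker\lambda_A = \{D\se A : Odd_G(D)\se A\}$ are exactly the generators of local sets contained in $A$: if $D\ne\emptyset$ lies in the kernel then $L = D\cup Odd_G(D)\se A$ is a local set, and conversely any local set $L\se A$ with generator $D$ produces a non-empty $D\in\ker\lambda_A$. This yields the pivotal equivalence that $A$ is not full cut-rank if and only if $A$ contains some local set as a subset, and in particular any local set (being contained in itself) is never full cut-rank.

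Next I would record a monotonicity fact: if $B\se C$ and $C$ is full cut-rank, then so is $B$. Indeed any $D\se B$ with $Odd_G(D)\se B$ also satisfies $D\se C$ and $Odd_G(D)\se C$, so $\ker\lambda_B\se\ker\lambda_C$; if the latter is trivial so is the former. A consequence, which is exactly what lets one read the "i.e." in the statement, is that "every proper subset of $A$ is full cut-rank" is equivalent to "$\cutrk(A\sm\{a\}) = |A|-1$ for every $a\in A$", since every proper subset lies inside some $A\sm\{a\}$.

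For the first item, fix a local set $A$ with generator $D_0$, so $A = D_0\cup Odd_G(D_0)$ and $D_0\in\ker\lambda_A$. Viewing $2^{A\sm\{a\}}$ as the subspace of $2^A$ of sets avoiding $a$, one checks $\ker\lambda_{A\sm\{a\}}\se\ker\lambda_A$. The crux is that this inclusion is strict: since $a\in A = D_0\cup Odd_G(D_0)$, either $a\in D_0$, whence $D_0\not\se A\sm\{a\}$, or $a\in Odd_G(D_0)$, whence $\lambda_{A\sm\{a\}}(D_0)\ni a$; in both cases $D_0\in\ker\lambda_A\sm\ker\lambda_{A\sm\{a\}}$. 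Hence $\dim\ker\lambda_A\gs\dim\ker\lambda_{A\sm\{a\}}+1$, and feeding this into rank--nullity for both $A$ and $A\sm\{a\}$ gives precisely $\cutrk(A)\ls\cutrk(A\sm\{a\})$.

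For the second item I would argue both directions from the dictionary. If $A$ is a minimal local set, it is not full cut-rank (it is a local set), and no proper subset can contain a local set, for such a local set would then sit strictly inside $A$; so by the pivotal equivalence every proper subset is full cut-rank, which is the stated condition. Conversely, if $A$ is not full cut-rank but every proper subset is, then $A$ contains a local set $L$; as $L$ is never full cut-rank it cannot be a proper subset of $A$, forcing $L=A$, so $A$ is a local set, and any local set strictly inside $A$ would be a non-full-cut-rank proper subset, a contradiction, so $A$ is minimal. The only step needing genuine care is the first item: identifying the two kernels inside a common ambient space and exhibiting the generator $D_0$ as the witness of strict inclusion is exactly where the hypothesis "$A$ is a local set" is used, and it must be carried out without conflating $2^A$ with $2^{A\sm\{a\}}$.
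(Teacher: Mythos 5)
Your proof is correct and follows essentially the same route as the paper: both rest on the rank--nullity dictionary identifying $|A|-\cutrk(A)$ with the dimension of $\ker\lambda_A$, i.e.\ with the count of generators of local sets contained in $A$, and both use the generator of $A$ as the witness that this count strictly drops when a vertex is removed. Your version merely makes explicit what the paper leaves implicit (the embedding $\ker\lambda_{A\sm\{a\}}\se\ker\lambda_A$, the case split $a\in D_0$ versus $a\in Odd_G(D_0)$, and the monotonicity of full cut-rank under taking subsets), which is a welcome level of detail but not a different argument.
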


\begin{proof} Notice that $|A| - \cutrk(A)$, the dimension of the kernel of $\lambda_A$, counts the number of sets that generate a local set in $A$: $2^{|A|-\cutrk(A)} = |\{D \se A ~|~ D \cup Odd_G(D) \se A\}|$. As a consequence, $A \se V$ being a local set means that for any $a \in A$, there is more local sets (counted with their generators) in $A$ than in $A \setminus \{a\}$. Then $|A\sm \{a\}| - \cutrk(A\sm \{a\}) < |A| - \cutrk(A)$. This translates to $\cutrk(A) \ls \cutrk(A \setminus \{a\})$.

Conversely, if $\forall a \in A, \cutrk(A) \ls \cutrk(A \setminus \{a\}) = |A|-1$, then any proper subset of $A$ contains no local set, and $A$ is a local set, thus $A$ is a minimal local set.
\end{proof}

Two graphs that have the same cut-rank function have the same local sets\footnote{As local complementation does not change the rank of cut-matrices, this provides an alternative proof of the invariance of local sets under local complementation.}. This is an equivalence with the additional condition that the local sets have the same number of generators:

\begin{proposition}
    \label{prop:corres_mls_cutrankalt}
    Two graphs have the same cut-rank function if and only if they have the same local sets with the same number of generators.   
\end{proposition}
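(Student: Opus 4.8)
The plan is to prove both directions of the equivalence by connecting the cut-rank function to the combinatorial data of local sets and their generators. Recall from the proof of \cref{prop:characMLS} the key counting identity: for any $A \se V$, the quantity $|A| - \cutrk(A)$ is exactly the dimension of the kernel of $\lambda_A$, and hence
\begin{equation*}
    2^{|A|-\cutrk(A)} = \bigl|\{D \se A ~|~ D \cup Odd_G(D) \se A\}\bigr|.
\end{equation*}
This says that the cut-rank function completely determines, for every $A$, the \emph{number} of generators $D \se A$ producing a (possibly empty) local set contained in $A$. The generators $D$ of local sets \emph{equal} to a fixed local set $L$ are precisely those $D \se L$ with $D \cup Odd_G(D) = L$, and these can be extracted from the counts on $L$ and its subsets by inclusion--exclusion. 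So the heart of the argument is that the map $A \mapsto \cutrk(A)$ and the map $A \mapsto \bigl|\{D \se A : D \cup Odd_G(D) \se A\}\bigr|$ carry exactly the same information.

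For the easy direction, suppose $G$ and $H$ have the same cut-rank function. Then for every $A$ they agree on $|A| - \cutrk(A)$, hence on the cardinality of $\{D \se A : D \cup Odd(D) \se A\}$ for each $A$. I would then argue that a set $L$ is a local set with a given number $k$ of generators if and only if these cardinalities, read off from $L$ and all its subsets, satisfy a fixed pattern; since $G$ and $H$ produce identical cardinalities, they must have identical local sets with identical generator counts. Concretely, the number of generators of $L$ itself is obtained by a Möbius/inclusion--exclusion formula over the subsets of $L$ applied to the function $A \mapsto |A| - \cutrk(A)$, which is common to both graphs, so $L$ is local in $G$ iff it is local in $H$, with the same count.

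For the converse, assume $G$ and $H$ have the same local sets with the same number of generators. I want to recover $\cutrk_G(A) = \cutrk_H(A)$ for every $A$. Using the identity again, it suffices to show that $\bigl|\{D \se A : D \cup Odd_G(D) \se A\}\bigr|$ is determined by the local-set-plus-generator data. But this set of generators decomposes as a disjoint union, over all local sets $L \se A$ (together with the empty-local-set case, i.e.\ $D$ with $Odd_G(D)=\emptyset$ and $D$ not generating a nonempty set), of the generators of each $L$; the number of terms and the size of each term are exactly the shared data. Hence the total count agrees, so $|A| - \cutrk_G(A) = |A| - \cutrk_H(A)$, giving $\cutrk_G(A) = \cutrk_H(A)$.

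The main obstacle I anticipate is bookkeeping the empty generator and the role of $D = \emptyset$: the counting identity includes $D = \emptyset$ (which trivially satisfies $\emptyset \cup Odd(\emptyset) = \emptyset \se A$), and more generally it counts generators of the \emph{empty} ``local set'' — that is, nonempty $D$ with $Odd_G(D) \se A$ but $D \cup Odd_G(D)$ possibly not forcing a nonempty set into $L$. Since the definition of local set requires non-emptiness, I must be careful to separate the generators contributing to genuine (non-empty) local sets from the degenerate ones, and check that this separation is itself determined by the shared data rather than by the specific graph. Making the inclusion--exclusion recovery of per-local-set generator counts fully rigorous, while correctly handling these edge cases, is the step requiring the most care; the rest is a clean translation through the kernel-dimension identity.
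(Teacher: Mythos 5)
Your proposal is correct and follows essentially the same route as the paper: both directions rest on the identity $2^{|A|-\cutrk(A)} = |\{D \se A : D \cup Odd_G(D) \se A\}|$, with the converse obtained by decomposing this set as a disjoint union of the generator sets of the local sets contained in $A$ (plus the single degenerate element $D=\emptyset$, whence the paper's ``$1+{}$'' term), and the forward direction by the subset-lattice inclusion--exclusion recursion you describe. The edge case you worry about is in fact harmless --- any nonempty $D$ with $Odd_G(D)\se A$ generates a genuine nonempty local set, so the only degenerate generator is $\emptyset$ itself, contributing a constant $1$ in both graphs.
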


\begin{proof}
    The cut-rank function can be computed from the number of generators of each local set: for every $A \se V$, $$\cutrk(A) = |A| - log_{2}\left( 1 + \sum_{\text{$L$ local set in $A$}} |\{D \se L ~|~ D \cup Odd_G(D) = L\}|\right)$$
    Conversely, the number of generators of each local set  can be recursively computed from the values of the cut-rank function: $$ |\{D \se A | D \cup Odd_G(D) = A\}| = 2^{|A| - \cutrk(A)} - \sum_{B \varsubsetneq A} |\{D \se B | D \cup Odd_G(D) = B\}|$$ 
\end{proof}

The proposition below lists essential properties of the cut-rank function.

\begin{proposition}[\cite{OUM2006}]
    \label{prop:cutrank_prop}
    The cut-rank function satisfies the following properties:
    \begin{itemize}[topsep=0.2em,itemsep=0.2em]
        \item \textbf{symmetry: } $\forall A \se V,~\cutrk(V\sm A) =  \cutrk(A)$,
        \item \textbf{linear boundedness: } $\forall A \se V,~\cutrk(A)  \ls |A|$,
        \item \textbf{submodularity: } $\forall A,B \se V,~\cutrk(A \cup B) + \cutrk(A \cap B)  \ls \cutrk(A) + \cutrk(B)$.
    \end{itemize}   
\end{proposition}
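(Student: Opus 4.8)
The three properties split into two routine facts about the cut-matrix and one genuine argument. For \textbf{symmetry}, I would observe that since $G$ is undirected the adjacency relation is symmetric, so the cut-matrix of the complementary cut is exactly the transpose of the original one: $\Gamma_{V\sm A} = \Gamma_A^{\top}$. As transposition leaves the rank of a matrix over $\FF_2$ unchanged, $\cutrk(V\sm A)=\cutrk(A)$. For \textbf{linear boundedness}, I would simply note that $\Gamma_A$ has exactly $|A|$ rows, hence its rank is at most $|A|$; combined with symmetry this even gives $\cutrk(A)\ls\min(|A|,|V\sm A|)$, but $\cutrk(A)\ls|A|$ is all that is claimed.

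The substantial part is \textbf{submodularity}, for which I would exploit the kernel reformulation of the cut-rank introduced above. Writing $K_A \defeq \ker \lambda_A = \{D \se A ~|~ Odd_G(D)\se A\}$, which is a linear subspace of $(2^V,\Delta)$, the rank–nullity theorem gives $\dim K_A = |A|-\cutrk(A)$. Since $|A|+|B| = |A\cup B| + |A\cap B|$, substituting $\cutrk(X)=|X|-\dim K_X$ shows that submodularity of $\cutrk$ is equivalent to the \emph{supermodularity} of the kernel dimension:
\begin{equation*}
    \dim K_A + \dim K_B \ls \dim K_{A\cup B} + \dim K_{A\cap B}.
\end{equation*}
To prove this I would establish two elementary facts about these subspaces. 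First, $K_A\cap K_B = K_{A\cap B}$: a set $D$ lies in both $K_A$ and $K_B$ precisely when $D\se A\cap B$ and $Odd_G(D)\se A\cap B$. Second, $K_A + K_B \se K_{A\cup B}$: every $D\in K_A$ satisfies $D\se A\se A\cup B$ and $Odd_G(D)\se A\se A\cup B$, whence $K_A\se K_{A\cup B}$ and likewise $K_B\se K_{A\cup B}$, and $K_{A\cup B}$ is closed under $\Delta$. Grassmann's identity $\dim K_A + \dim K_B = \dim(K_A+K_B) + \dim(K_A\cap K_B)$ together with these two facts then yields the inequality.

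The only place demanding care is this last step, and specifically the asymmetry between the two facts: we have \emph{equality} $K_A\cap K_B=K_{A\cap B}$ but only an \emph{inclusion} $K_A+K_B\se K_{A\cup B}$ — a generator of a local set inside $A\cup B$ need not split into one generator inside $A$ and one inside $B$. Fortunately only the inclusion is needed, as it supplies $\dim(K_A+K_B)\ls \dim K_{A\cup B}$, which is exactly the bound in the required direction. Beyond checking that $K_A$ really is the kernel of the linear map $\lambda_A$ (so that rank–nullity applies), I expect no further obstacle: once the cut-rank is read through its kernel, the argument is purely linear-algebraic.
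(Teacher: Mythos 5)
Your proof is correct. The paper itself gives no proof of this proposition --- it is quoted from Oum and Seymour \cite{OUM2006} --- so there is no argument of the authors' to compare against line by line; what you supply is a valid self-contained derivation. Your treatment of symmetry ($\Gamma_{V\sm A}=\Gamma_A^{\top}$, rank invariant under transposition) and linear boundedness (row count) is the standard one. For submodularity, the classical route in the literature is to invoke submodularity of matrix rank directly, namely the inequality $\mathrm{rank}(M[X\cup X', Y\cap Y'])+\mathrm{rank}(M[X\cap X', Y\cup Y'])\ls \mathrm{rank}(M[X,Y])+\mathrm{rank}(M[X',Y'])$ applied to submatrices of the adjacency matrix. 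Your kernel-based argument is a genuinely different and arguably more transparent route: translating $\cutrk(X)=|X|-\dim K_X$ turns the claim into supermodularity of $\dim K_X$, and the two set-theoretic facts $K_A\cap K_B=K_{A\cap B}$ and $K_A+K_B\se K_{A\cup B}$ combined with Grassmann's identity finish it. Both facts check out ($D$ lies in $K_A\cap K_B$ iff $D\se A\cap B$ and $Odd_G(D)\se A\cap B$; each $K_A,K_B$ sits inside the subspace $K_{A\cup B}$), and you correctly identify that only the inclusion, not an equality, is needed on the union side. A pleasant side effect of your approach is that it meshes with the paper's own framing of local sets as kernel elements of $\lambda_A$, so the generators counted in the proof of Proposition~\ref{prop:characMLS} and the subspaces driving submodularity are literally the same objects.
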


The conjunction of symmetry, linear boundedness and submodularity, together with the fact that $\cutrk$ has values in $\mathbb{N}$, implies the following useful properties:

\begin{proposition} \label{prop:cutrank_prop2}
    For any graph $G=(V,E)$ of order $n$, 
    \begin{itemize}[topsep=0.2em,itemsep=0.2em]
        \item[$(i)$] $\cutrk(\emptyset)= \cutrk(V) = 0$,
        \item[$(ii)$]  $\forall A \se V,\cutrk(A) \ls \lfloor n/2 \rfloor$, so a full-cut-rank set is of size at most $\lfloor n/2 \rfloor$,
        \item[$(iii)$]  If $\cutrk(K)=|K|$ then for any $A\se K$, $\cutrk(A)=|A|$, i.e. any subset of a full-cut-rank set is full cut-rank.    
    \end{itemize} 
\end{proposition}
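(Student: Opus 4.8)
The plan is to derive all three items directly from the three properties collected in \cref{prop:cutrank_prop} (symmetry, linear boundedness, submodularity) together with the fact that $\cutrk$ takes values in $\NN$, without ever touching the underlying cut-matrices.

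For $(i)$, I would apply linear boundedness to the empty set to get $\cutrk(\emptyset)\ls|\emptyset|=0$; since $\cutrk(\emptyset)\in\NN$ this forces $\cutrk(\emptyset)=0$. Symmetry with $A=\emptyset$ then gives $\cutrk(V)=\cutrk(V\sm\emptyset)$, wait—more precisely $\cutrk(V\sm\emptyset)=\cutrk(V)$ and $\cutrk(V\sm V)=\cutrk(\emptyset)$, so $\cutrk(V)=\cutrk(\emptyset)=0$. For $(ii)$, I would combine linear boundedness applied to $A$ with linear boundedness applied to $V\sm A$ rewritten through symmetry: on one hand $\cutrk(A)\ls|A|$, and on the other $\cutrk(A)=\cutrk(V\sm A)\ls|V\sm A|=n-|A|$. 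Taking the smaller of the two bounds yields $\cutrk(A)\ls\min(|A|,n-|A|)\ls\lfloor n/2\rfloor$. The claim about full-cut-rank sets is then immediate, as $\cutrk(A)=|A|$ forces $|A|\ls\lfloor n/2\rfloor$.

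For $(iii)$, which I expect to be the only nontrivial item, the idea is to feed submodularity a pair of sets whose union is the full-cut-rank set $K$ and whose intersection is empty, so that the strong lower bound coming from $\cutrk(K)=|K|$ can be played off against linear boundedness. Concretely, for $A\se K$ I would apply submodularity to $A$ and $K\sm A$: since $A\cup(K\sm A)=K$ and $A\cap(K\sm A)=\emptyset$, using $(i)$ and the hypothesis gives
\[
|K| = \cutrk(K) + \cutrk(\emptyset) \ls \cutrk(A) + \cutrk(K\sm A).
\]
Linear boundedness gives $\cutrk(A)\ls|A|$ and $\cutrk(K\sm A)\ls|K|-|A|$, so the right-hand side is at most $|K|$. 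Hence every inequality is tight; in particular, were $\cutrk(A)<|A|$, the sum would be strictly below $|K|$, a contradiction, so $\cutrk(A)=|A|$.

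The main obstacle is really just selecting the right two sets for submodularity in $(iii)$; once $A$ and $K\sm A$ are chosen, the squeeze between the submodular lower bound $|K|$ and the linear upper bound $|K|$ is exactly what pins $\cutrk(A)$ to $|A|$, while $(i)$ and $(ii)$ are each one-line consequences of a single property. As a fallback for $(iii)$ I would keep the single-element-removal argument in reserve—removing the vertices of $K\sm A$ one at a time and using submodularity on $K'\sm\{k\}$ and $\{k\}$ to show full cut-rank is preserved at each step—but the disjoint-pair computation above avoids the induction entirely.
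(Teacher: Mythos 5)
Your proposal is correct and follows essentially the same route as the paper's own proof: $(i)$ and $(ii)$ from linear boundedness plus symmetry, and $(iii)$ by applying submodularity to the pair $A$ and $K\sm A$ and squeezing against linear boundedness. No gaps; the fallback induction you mention is unnecessary, as you note.
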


\begin{proof}
    $(i)$ Using {linear boundedness}, $\cutrk(\emptyset) \ls |\emptyset| = 0$ so $\cutrk(\emptyset) = 0$ as $\cutrk$ has values in $\NN$. Then, using {symmetry}, $\cutrk(V) = \cutrk(\emptyset) = 0$. $(ii)$
  Let $A \se V$. Using {linear boundedness}, $\cutrk(A) \ls |A|$. Also, using {symmetry}, $\cutrk(A) = \cutrk(V\sm A) \ls |V\sm A| = n - |A|$. Thus, $\cutrk(A) \ls \min(|A|, n - |A|) \ls \lfloor n/2 \rfloor$.
    $(iii)$ Let $A\se K$. Using {submodularity}, $\cutrk(A)+\cutrk(K\sm A)\gs \cutrk(K)+\cutrk(\emptyset)=|K|+0$. Moreover, using {linear boundedness}, $\cutrk(A)\ls |A|$ and $\cutrk(K\sm A)\ls |K|-|A|$, so  $\cutrk(A)= |A|$.
\end{proof}

\subsection{Size and number of minimal local  sets}
\label{subsec:additional}

According to \cref{prop:cutrank_prop2} and \cref{prop:characMLS}, minimal local sets are of size at most half the order of the graph. We slightly refine this bound and show it is tight:

\begin{restatable}{proposition}{MLSbounds}
    For any minimal local set $A$ in a graph of order $n$,
     \[|A|\le \begin{cases}   \hspace{0.15cm}n/2&\text{if $n= 0 \bmod 4$}\\   \lfloor n/2 \rfloor + 1 &\text{otherwise}\end{cases}\]
    This bound is tight in the sense that for every $n>0$ there exists a graph of order $n$ that contains a minimal local set of this particular size.
    \label{prop:MLSbounds}
\end{restatable}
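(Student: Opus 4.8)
The plan is to prove the upper bound from the cut-rank characterisation, and then to realise it by explicit graphs. For the bound, I would start from \cref{prop:characMLS}: if $A$ is a minimal local set, every proper subset $A\sm\{a\}$ is full cut-rank, and by \cref{prop:cutrank_prop2}$(ii)$ a full-cut-rank set has size at most $\lfloor n/2\rfloor$. Hence $|A|-1\ls\lfloor n/2\rfloor$, that is $|A|\ls\lfloor n/2\rfloor+1$. This already matches the claimed value for every $n\not\equiv 0\bmod 4$, so the only real work in the upper bound is to improve $n/2+1$ to $n/2$ when $4\mid n$.

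The engine of that improvement is a bound on the number of generators of $A$. Set $d=\dim\ker\lambda_A=|A|-\cutrk(A)$, the base-two logarithm of the number of generators. First I would note that every non-zero $D\in\ker\lambda_A$ generates $A$: indeed $D\cup Odd_G(D)$ is then a non-empty local set contained in $A$, hence equals $A$ by minimality. Fixing any $x\in A$, the map $\phi_x\colon\ker\lambda_A\to\FF_2^2$ defined by $\phi_x(D)=([x\in D],[x\in Odd_G(D)])$ is $\FF_2$-linear, and it is injective, since a non-zero $D$ with $\phi_x(D)=(0,0)$ would satisfy $x\notin D\cup Odd_G(D)=A$, contradicting $x\in A$. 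Therefore $d\ls 2$: a minimal local set has at most four generators.

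Now suppose $4\mid n$ and, for a contradiction, $|A|=n/2+1$. By symmetry and linear boundedness (\cref{prop:cutrank_prop}), $\cutrk(A)=\cutrk(V\sm A)\ls|V\sm A|=n/2-1=|A|-2$, so $d\gs 2$, whence $d=2$ and $\phi_x$ is a bijection for every $x\in A$. Writing $\ker\lambda_A=\{0,D_1,D_2,D_3\}$ with $D_3=D_1\Delta D_2$, bijectivity means that for each $x$ exactly one index $i$ has $\phi_x(D_i)=(1,1)$; equivalently $A=\bigsqcup_{i=1}^{3}\big(D_i\cap Odd_G(D_i)\big)$ is a disjoint union. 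Each block $D_i\cap Odd_G(D_i)$ is the set of odd-degree vertices of the induced subgraph $G[D_i]$, so it has even size by the handshake lemma, and therefore $|A|$ is even. But $n/2$ is even, so $|A|=n/2+1$ is odd, a contradiction; hence $|A|\ls n/2$ when $4\mid n$.

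For tightness I would exhibit, for each $n$, a graph attaining the bound. Whenever $2m-1\ls n$ a circuit construction works: let $A=\{a_1,\dots,a_m\}$ carry no internal edges, let $B=\{b_1,\dots,b_{m-1}\}$ with $a_i\sim b_i$ for $i<m$ and $a_m$ adjacent to every $b_j$, and add the remaining vertices as isolated ones; a direct computation gives $\ker\lambda_A=\{\emptyset,A\}$, so $A$ is a minimal local set of size $m$. This settles $n$ odd ($m=(n+1)/2$, no isolated vertex) and $4\mid n$ ($m=n/2$, one isolated vertex). The genuinely delicate case, and the main obstacle, is $n\equiv 2\bmod 4$: here the target $m=n/2+1$ forces $|B|=m-2$ and $\cutrk(A)=m-2$, so $A$ must possess \emph{two} independent generators and internal edges become unavoidable. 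For this case I would take $m=2s$, split $A=X\sqcup Y$ with $|X|=|Y|=s$, place a perfect matching $x_i\sim y_i$ inside $A$, and choose the $2s-2$ vertices of $B$ so that their $A$-neighbourhoods form a basis of the space of subsets of $A$ meeting each of $X$ and $Y$ in an even number of vertices (for instance the neighbourhoods $\{x_i,x_{i+1}\}$ and $\{y_i,y_{i+1}\}$). Then $\ker\lambda_A=\{\emptyset,X,Y,A\}$, while the matching yields $Odd_G(X)=Y$ and $Odd_G(Y)=X$, so each of $X,Y,A$ generates $A$ and $A$ is a minimal local set of size $2s=n/2+1$. Verifying this last construction is where most of the effort lies; the conceptual heart of the statement is the "at most four generators" lemma together with the parity count that powers the upper bound.
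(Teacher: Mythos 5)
Your proof is correct, and although it follows the same skeleton as the paper's (proper subsets of a minimal local set are full cut-rank, hence $|A|\ls\lfloor n/2\rfloor+1$; a parity obstruction excludes $|A|=n/2+1$ when $4\mid n$; explicit gadgets give tightness), both substantive ingredients are obtained by genuinely different arguments. For the parity step the paper invokes its appendix result that a minimal local set has exactly $1$ or $3$ generators, the latter only for even size, proved by a degree-sum computation over the three parts $D_0\sm D_1$, $D_1\sm D_0$, $D_0\cap D_1$ of two hypothetical generators. You instead bound $\dim\ker\lambda_A\ls 2$ via the injective evaluation map $\phi_x$, and then read off the parity from the partition $A=\bigsqcup_{i}\bigl(D_i\cap Odd_G(D_i)\bigr)$ together with the handshake lemma applied to each induced subgraph $G[D_i]$; this is a more uniform derivation that extracts both the generator count and the parity constraint from a single linear-algebraic picture, at the cost of not recovering the paper's finer classification (exactly $1$ or $3$ generators). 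Your tightness gadgets also differ from the paper's, which uses a spider with pendant edges for $n$ odd, the same spider with an extra apex for $4\mid n$, and a two-sided construction for $n\equiv 2\bmod 4$: your independent-set-plus-parity-forcing graph handles $n$ odd and $4\mid n$ in one stroke (padding with an isolated vertex in the latter case), and your matching-plus-consecutive-pairs graph for $n\equiv 2\bmod 4$ correctly produces the required kernel $\{\emptyset,X,Y,A\}$ with $Odd_G(X)=Y$, so $A$ is indeed a minimal local set of size $n/2+1$. Both routes are valid; yours is somewhat shorter and more self-contained, since it does not rely on the appendix proposition.
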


\begin{proof}
    If $n \neq 0 \bmod 4$, the bound derives from \cref{prop:characMLS} along with \cref{prop:cutrank_prop2}. Otherwise, the bound is slightly stronger due to constraints of the cut-rank function. See details, along with explicit constructions of graphs that attain the bound, in \cref{app:proof_MLSbounds}.
\end{proof}

Small and large minimal local sets can coexist in a graph, this is for instance the case for a path graph $P_n$ of order $n>2$ which has minimal local sets of any size ranging from $2$ to $\lceil n/2 \rceil$ (see   \cref{sec:MLSPath}). Complete graphs, and bipartite complete graphs only have small minimal local sets (which are respectively any pair of vertices and any pair of vertices in the same partition). There exist also graphs with only `large' minimal local sets, namely at least $0.189n$ where $n$ is the order of the graph. Indeed, the size of the smallest local set is nothing but the local minimum degree of the graph plus one \cite{Perdrix06}, and it has been proved that there exist graphs with such a large local minimum degree \cite{Javelle12}.

The size of the local sets is also, to some extent, related to the number of minimal local sets in a graph. For instance, complete graphs and bipartite complete graphs have a number of minimal local sets quadratic in their order. More generally, if the size of the minimal local sets is upper bounded by $k$ then there are obviously $O(n^k)$ minimal local sets in a graph of order $n$. Maybe more surprisingly, a lower bound on the size of the minimal local sets implies a lower bound on their number: 

\begin{restatable}{proposition}{expnumberMLS}    
Given a graph $G$ of order $n$, if all minimal local sets of $G$ are of size at least $m$, then the number of minimal local sets in $G$ is at least \[\frac{1-2r}{3\sqrt n}2^{n\left(1-(1-r)H_2\big(\frac1{2(1-r)}\big)\right)}\]
where $r=\frac m n$ and $H_2(x) = -x\log_2(x) - (1-x)\log_2(1-x)$ is the binary entropy. 
\label{prop:exp_number_MLS}
\end{restatable}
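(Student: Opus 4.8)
The plan is to translate the hypothesis into the language of the cut-rank function and then run a pigeonhole argument over the subsets of $V$ of size $\lfloor n/2\rfloor+1$. First I would reformulate the hypothesis. By \cref{prop:characMLS} minimal local sets are exactly the minimal non-full-cut-rank sets, and by \cref{prop:cutrank_prop2}$(iii)$ the full-cut-rank sets are closed under taking subsets. Hence every set that is not full cut-rank contains a minimal local set: take a minimal non-full-cut-rank subset, and observe it meets the characterisation of \cref{prop:characMLS}. So the hypothesis ``every minimal local set has size at least $m$'' is equivalent to ``every set of size at most $m-1$ is full cut-rank''. Writing $N$ for the number of minimal local sets, \cref{prop:MLSbounds} guarantees that each of them has size in the interval $[m,\lfloor n/2\rfloor+1]$.

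Next I would set $k=\lfloor n/2\rfloor+1$ and count incidences between minimal local sets and the $k$-subsets of $V$. Since $k>\lfloor n/2\rfloor$, \cref{prop:cutrank_prop2}$(ii)$ forbids any $k$-set from being full cut-rank, so each of the $\binom nk$ sets of size $k$ contains at least one minimal local set. Assigning to every $k$-set one minimal local set it contains, a fixed minimal local set $A$ is the image of at most $\binom{n-|A|}{k-|A|}$ sets, namely its $k$-supersets; as $\binom{n-a}{k-a}$ is non-increasing in $a$ (the ratio of consecutive terms is $\tfrac{k-a}{n-a}\ls 1$) and $|A|\gs m$, this is at most $\binom{n-m}{k-m}$. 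Pigeonhole then gives
\[
N \;\gs\; \frac{\binom nk}{\binom{n-m}{k-m}} \;=\; \frac{\binom nm}{\binom km},
\]
the last equality being the identity $\binom nk\binom km=\binom nm\binom{n-m}{k-m}$.

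Finally I would estimate this ratio with standard entropy bounds, lower-bounding $\binom nk$ and upper-bounding $\binom km$. With $r=m/n$ and $k\approx n/2$ this produces an exponent $nH_2(r)-\tfrac n2H_2(2r)$, and a short computation verifies the identity $H_2(r)-\tfrac12H_2(2r)=1-(1-r)H_2\!\big(\tfrac1{2(1-r)}\big)$, matching the exponent in the statement (equivalently, one may estimate $\binom{n-m}{k-m}\ls 2^{(n-m)H_2(\frac1{2(1-r)})}$ directly, since $\tfrac{k-m}{n-m}\approx 1-\tfrac1{2(1-r)}$). The bound is vacuous for $r\gs\tfrac12$, where $1-2r\ls0$, so one may assume $r<\tfrac12$, which also ensures $m<k$ and keeps all binomials well defined.

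The main obstacle is extracting the precise polynomial prefactor $\tfrac{1-2r}{3\sqrt n}$: this needs two-sided, non-asymptotic Stirling-type bounds on $\binom nk$ and $\binom{n-m}{k-m}$ rather than the crude entropy estimates, together with careful handling of the floor and the ``$+1$'' in $k=\lfloor n/2\rfloor+1$. The factor $1-2r$ is natural here, since the lower index $k-m\approx n(\tfrac12-r)$ controls the balance of $\binom{n-m}{k-m}$; as $r\to\tfrac12$ this index shrinks and the bound correctly degenerates to a trivial one.
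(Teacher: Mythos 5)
Your proposal follows essentially the same route as the paper: both count the $\binom{n}{\lfloor n/2\rfloor+1}$ sets of size $\lfloor n/2\rfloor+1$, observe that each such set fails to be full cut-rank and hence contains a minimal local set, note that any fixed minimal local set of size at least $m$ lies in at most $\binom{n-m}{\lfloor n/2\rfloor+1-m}$ of them, and then apply entropy bounds on binomial coefficients; your identity $H_2(r)-\tfrac12 H_2(2r)=1-(1-r)H_2\bigl(\tfrac{1}{2(1-r)}\bigr)$ checks out. The only part you leave unfinished is the one the paper spends most of its proof on: a two-case analysis on the parity of $n$ using the explicit bounds $\binom{2k}{k}\gs 2^{2k-1}/\sqrt{\pi k}$ and $\binom{dk}{k}\ls 2^{dkH_2(1/d)}$ to pin down the prefactor $\tfrac{1-2r}{3\sqrt n}$ exactly, which is routine but does need to be carried out.
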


\begin{proof}
\cref{prop:cutrank_prop2} along with \cref{prop:characMLS} imply that any set with more than $n/2$ vertices contains a minimal local set. One can upper bound the number of sets of size $n/2$ in which a given fixed minimal local set is contained. Roughly speaking, a large minimal local set is contained in fewer sets of size $n/2$ than a small minimal local set. Thus, a counting argument implies the lower bound on the number of minimal local sets. See details in \cref{app:proof_exp_number_MLS}.
\end{proof}

\cref{prop:exp_number_MLS} implies that a graph of order $n$, in which the size of the minimal local sets is lower bounded by $cn$ for some constant $c>0$,  has an exponential number of minimal local sets. In particular, graphs of order $n$ and local minimum degree of size at least $0.189n$ have at least $1.165^n$ minimal local sets. Notice that there also exist explicit graphs with exponentially many minimal local sets, this is for instance the case of the graph $K_{k,k} \Delta M_k$ defined as the symmetric difference of a complete bipartite graph and a matching, which is of order $n=2k$ and has more than $2^{k-1}=\frac12\sqrt 2^n$ minimal local sets (see \cref{app:bipartite_matching}).

As the number of minimal local sets can be exponential in the order of a graph, we focus, in the next section,  on MLS covers, which are representative families of minimal local sets covering any vertex of a graph. 
Notice that if a graph has an MLS cover, then it admits an MLS cover made of at most a linear number of minimal local sets.

\section{MLS cover}
\label{sec:MLS_cover}

To cover a vertex $u$ of a graph $G$ with a minimal local set, one can consider the local set generated by $u$, i.e. its closed neighborhood $N_G[u]=\{u\}\cup N_G(u)$. However, such a  local set is not always minimal, worse yet, it does not necessarily contain a minimal local set that includes $u$. Indeed, in the cycle $C_4$ of order 4, for every vertex $u$, $N_G[u]$ contains a single minimal local set that does not include $u$ (see \cref{fig:MLS}). 

In this particular $C_4$ example however, one can notice that, roughly speaking, minimizing local sets of the form $N_G[u]$ is enough to get an MLS cover: considering the minimal local sets included in closed neighborhoods is enough to produce an MLS cover.

This strategy does not work in general. There exist graphs (see \cref{fig:MLSCover_is_hard}) for which an MLS cover requires minimal local sets that are not subsets of any closed neighborhood. Thus, one can wonder what is the best strategy to produce an MLS cover and even whether it exists for every graph. We show in this section that every graph admits an MLS cover (\cref{thm:MLS_cover}, proved in  \cref{sec:MLS_cover_proof}) and then introduce an efficient algorithm for finding an MLS cover (\cref{sec:algorithm}).

\begin{figure}[h]
    \centering
    \scalebox{0.8}{
    \begin{tikzpicture}[xscale = 0.9, yscale =1]    
        \begin{scope}[every node/.style={circle,minimum size=15pt,thick,draw,fill=cornflowerblue!40, inner sep = 0pt}]
            \node (U1) at (0,0) {$1$};
            \node (U0) at (-10,0) {$0$};
            \node (U2) at (10,0) {$2$};
            \node (U3) at (-5,3.75) {$3$};
            \node (U4) at (-5,2.25) {$4$};
            \node (U5) at (-5,0.75) {$5$};
            \node (U6) at (-5,-0.75) {$6$};
            \node (U7) at (-5,-2.25) {$7$};
            \node (U8) at (-5,-3.75) {$8$};
            \node (U9) at (5,3.75) {$9$};
            \node (U10) at (5,2.25) {$10$};
            \node (U11) at (5,0.75) {$11$};
            \node (U12) at (5,-0.75) {$12$};
            \node (U13) at (5,-2.25) {$13$};
            \node (U14) at (5,-3.75) {$14$};
        \end{scope}
        \begin{scope}[every node/.style={},
                        every edge/.style={draw=darkgray,very thick}]              
            \path [-] (U0) edge node {} (U3);
            \path [-] (U0) edge node {} (U4);
            \path [-] (U0) edge node {} (U5);
            \path [-] (U0) edge node {} (U6);
            \path [-] (U0) edge node {} (U7);
            \path [-] (U0) edge node {} (U8);
            \path [-] (U1) edge node {} (U3);
            \path [-] (U1) edge node {} (U4);
            \path [-] (U1) edge node {} (U5);
            \path [-] (U1) edge node {} (U6);
            \path [-] (U1) edge node {} (U7);
            \path [-] (U1) edge node {} (U8);
            \path [-] (U1) edge node {} (U9);
            \path [-] (U1) edge node {} (U10);
            \path [-] (U1) edge node {} (U11);
            \path [-] (U1) edge node {} (U12);
            \path [-] (U1) edge node {} (U13);
            \path [-] (U1) edge node {} (U14);
            \path [-] (U2) edge node {} (U9);
            \path [-] (U2) edge node {} (U10);
            \path [-] (U2) edge node {} (U11);
            \path [-] (U2) edge node {} (U12);
            \path [-] (U2) edge node {} (U13);
            \path [-] (U2) edge node {} (U14);
            \path [-] (U3) edge node {} (U4);
            \path [-] (U5) edge node {} (U6);
            \path [-] (U7) edge node {} (U8);
            \path [-] (U9) edge node {} (U10);
            \path [-] (U11) edge node {} (U12);
            \path [-] (U13) edge node {} (U14);

        \end{scope}
    \end{tikzpicture}}
    
    \caption{Example of a graph $G$ of order 15 where a naive approach for finding an MLS cover, based on the neighborhood of every vertex, cannot work. The vertices 0, 1 and 2 are contained in several minimal local sets, the most obvious one being $\{0,1,2\}$. However, no local set of the form $N_G[u]$ contains a minimal local set that contains either 0, 1 or 2. Indeed, the minimal local sets in $N_G[0]$ are $\{3,4\}$, $\{5,6\}$ and $\{7,8\}$. Symmetrically, the minimal local sets in $N_G[2]$ are $\{9,10\}$, $\{11,12\}$ and $\{13,14\}$. The minimal local sets in $N_G[1]$ are $\{3,4\}$, $\{5,6\}$, $\{7,8\}$, $\{9,10\}$, $\{11,12\}$ and $\{13,14\}$. The only minimal local set in $N_G[3]$ is $\{3,4\}$, and the same goes for 4, 5, 6, 7, 8, 9, 10, 11, 12, 13 and 14.}
    \label{fig:MLSCover_is_hard}
\end{figure}
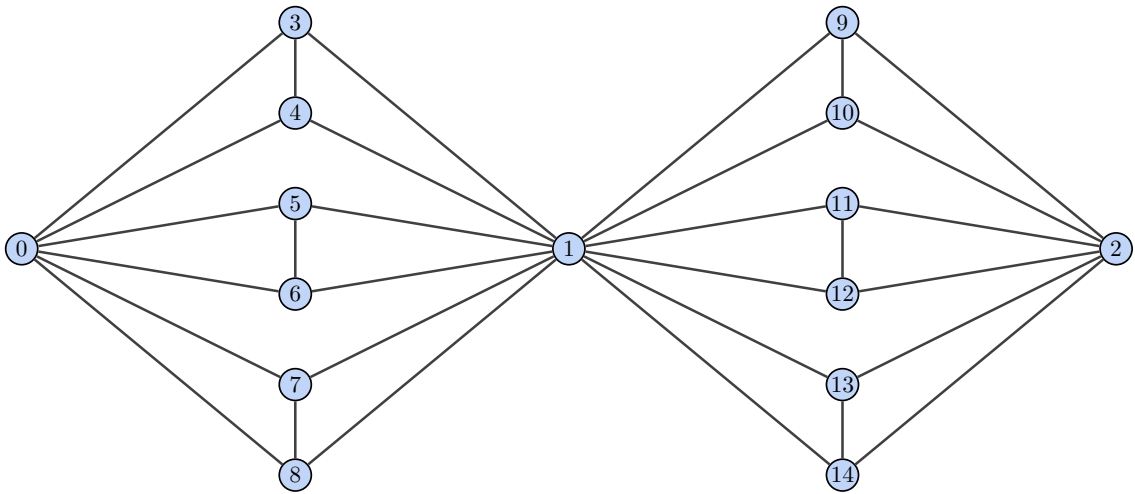

\begin{theorem}
    \label{thm:MLS_cover}
    Any graph has an MLS cover.   
\end{theorem}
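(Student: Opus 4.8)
The plan is to reduce \cref{thm:MLS_cover} to the statement that every vertex lies in \emph{some} minimal local set: once this is known, collecting for each vertex $u$ one minimal local set containing it yields an MLS cover (of at most $n$ sets). So I fix a vertex $u$ and aim to produce a minimal local set containing it. Throughout I would work on the cut-rank side, via \cref{prop:characMLS}: minimal local sets are exactly the sets that are not full cut-rank but all of whose proper subsets are full cut-rank.

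The organizing idea is an auxiliary notion. Call a set $A$ \emph{good for $u$} if $u\in A$, $A$ is not full cut-rank, and $A\sm\{u\}$ is full cut-rank. First I would verify the easy direction: an inclusion-minimal good set $A$ is automatically a minimal local set containing $u$. Indeed, any proper subset of $A$ avoiding $u$ lies in $A\sm\{u\}$, hence is full cut-rank by \cref{prop:cutrank_prop2}$(iii)$; and a proper subset $B\ni u$ that were \emph{not} full cut-rank would itself be good (its part $B\sm\{u\}\se A\sm\{u\}$ stays full cut-rank by $(iii)$), contradicting minimality of $A$. Thus every proper subset of $A$ is full cut-rank and \cref{prop:characMLS} applies. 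This reduces everything to the existence of a good set.

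The crux is therefore to show that every $u$ admits a good set, which I would prove by contradiction. Negating existence says: for every full-cut-rank set $K$ with $u\notin K$, the set $K\cup\{u\}$ is again full cut-rank. I would first record a short linear-algebra lemma: if $\cutrk(S)=r$ then $S$ contains a full-cut-rank subset of size $r$ (take $r$ independent rows of the cut-matrix $\Gamma_S$; the corresponding vertex set remains independent when the column side is enlarged, hence is full cut-rank). Now let $B$ be a full-cut-rank set of \emph{maximum} size $\rho$. Under the negated assumption $u\in B$, since otherwise $B\cup\{u\}$ would be full cut-rank of size $\rho+1$. By symmetry $\cutrk(V\sm B)=\cutrk(B)=\rho$, so the lemma supplies a full-cut-rank $C\se V\sm B$ with $|C|=\rho$ and $u\notin C$; the assumption then forces $C\cup\{u\}$ to be full cut-rank of size $\rho+1$, contradicting the maximality of $\rho$. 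Hence a good set exists, and shrinking it to an inclusion-minimal good set finishes the argument. (The isolated-vertex case is the trivial instance $K=\emptyset$, for which $\{u\}$ is itself good.)

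I expect the existence of a good set to be the only genuine obstacle: the minimality side is bookkeeping with \cref{prop:characMLS} and heredity $(iii)$, whereas existence is precisely where the non-monotone yet symmetric and linearly bounded nature of $\cutrk$ is essential — the complementation identity $\cutrk(V\sm B)=\cutrk(B)$ is what manufactures a full-cut-rank set of maximum size avoiding $u$. Finally, every step involved (greedily building a maximum full-cut-rank set over $\FF_2$, passing to the complement, and reducing a good set to a minimal one) is efficiently computable, so the proof is constructive and yields the claimed polynomial-time algorithm for producing an MLS cover.
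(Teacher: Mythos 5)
Your proof is correct and follows essentially the same route as the paper: reduce to covering a single vertex, characterize membership in a minimal local set via a full-cut-rank set $A$ with $A\cup\{u\}$ not full cut-rank (your ``good set'' $A\cup\{u\}$ and its inclusion-minimization are exactly the content of \cref{lemma:characinMLS}), and then produce such a set by taking a maximum full-cut-rank set and using symmetry of $\cutrk$ to find a disjoint full-cut-rank set of the same size (\cref{cor:full_avoid}), so that one of the two avoids $u$ and maximality does the rest. The one genuine divergence is your extraction lemma ``$\cutrk(S)=r$ implies $S$ contains a full-cut-rank subset of size $r$'': you prove it by selecting $r$ independent rows of the cut-matrix $\Gamma_S$ and observing that enlarging the column side preserves independence, whereas the paper derives the same statement (\cref{lemma:exists_full}, via \cref{lemma:localdec}) purely from symmetry, linear boundedness and submodularity. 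Your argument is shorter and more concrete over $\FF_2$ (and still works for $q$-multigraphs in \cref{sec:extension}), but the paper's version is what allows the theorem to generalize verbatim to arbitrary $\mathbb{N}$-valued symmetric, linearly bounded, submodular set functions as in \cref{subsec:MLSCover_generalization}, where no underlying matrix is available.
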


\subsection{Proof of Theorem \ref{thm:MLS_cover}}
\label{sec:MLS_cover_proof}

In this subsection, we prove \cref{thm:MLS_cover}. First, we introduce a cut-rank-based characterisation of the existence of a minimal local set covering a given vertex:

\begin{lemma}
    \label{lemma:characinMLS} Given a graph $G=(V,E)$, 
    $a\in V$ is contained in a minimal local set if and only if there exists $A\se V$ such that $A$ is full cut-rank but $A\cup \{a\}$ is not.
\end{lemma}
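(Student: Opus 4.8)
The plan is to prove both directions by relating a minimal local set containing $a$ to a witness set $A$ that is full cut-rank while $A \cup \{a\}$ is not. By \cref{prop:characMLS}, a set $L$ is a minimal local set if and only if $L$ is not full cut-rank but every proper subset is full cut-rank, with the sharper statement $\forall x \in L, \cutrk(L) \ls \cutrk(L \sm \{x\}) = |L|-1$. So the natural first move is to observe that if $a$ lies in a minimal local set $L$, then $A := L \sm \{a\}$ is a proper subset of $L$ and hence full cut-rank, whereas $A \cup \{a\} = L$ is not full cut-rank. This gives the forward direction almost immediately, since $a \notin A$ (as $A = L \sm \{a\}$) and all the cut-rank data we need is exactly what \cref{prop:characMLS} records.

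For the converse, suppose we are given $A$ full cut-rank with $A \cup \{a\}$ not full cut-rank; we must manufacture a minimal local set containing $a$. The idea is to start from $A \cup \{a\}$, which by hypothesis fails to be full cut-rank, and shrink it down to a minimal local set while keeping $a$ inside. Concretely, I would consider inclusion-minimal subsets $B \se A \cup \{a\}$ with $a \in B$ that are \emph{not} full cut-rank; such $B$ exist because $A \cup \{a\}$ itself is a candidate. Pick a minimal such $B$. The claim is that $B$ is a minimal local set. To verify this via \cref{prop:characMLS}, I need that $B$ is not full cut-rank (true by choice) and that every proper subset of $B$ is full cut-rank with $\cutrk(B \sm \{x\}) = |B|-1$. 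Here is where the hypothesis that $A$ is full cut-rank does the work: every proper subset of $B$ that omits $a$ is a subset of $A$, hence full cut-rank by \cref{prop:cutrank_prop2}$(iii)$; and every proper subset of $B$ that contains $a$ is a strictly smaller set than $B$ containing $a$, hence full cut-rank by the minimality of $B$. So all proper subsets of $B$ are full cut-rank, i.e. $\cutrk(B \sm \{x\}) = |B|-1$ for all $x \in B$, and since $B$ is not full cut-rank we get $\cutrk(B) \ls |B|-1 = \cutrk(B \sm \{x\})$, precisely the condition of \cref{prop:characMLS}. Therefore $B$ is a minimal local set containing $a$.

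The step I expect to be the main obstacle is checking the proper subsets of $B$ \emph{that contain} $a$, since these are not subsets of $A$ and so \cref{prop:cutrank_prop2}$(iii)$ does not apply to them directly. The resolution is to lean on the minimality of $B$ among non-full-cut-rank subsets containing $a$: any proper subset $B' \subsetneq B$ with $a \in B'$ is strictly smaller, so by minimality it cannot be non-full-cut-rank, hence it is full cut-rank. This is the one place where the argument genuinely uses that $B$ was chosen minimal (rather than merely as some non-full-cut-rank subset), so I would state that selection carefully at the outset. Once both families of proper subsets are handled, the characterisation of \cref{prop:characMLS} closes the converse, and with the forward direction this establishes the equivalence.
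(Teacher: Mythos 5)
Your proof is correct and takes essentially the same route as the paper's: the forward direction uses $L\sm\{a\}$ as the witness, and for the converse the paper likewise picks an inclusion-minimal $C\se A$ with $C\cup\{a\}$ not full cut-rank (identical to your minimal $B=C\cup\{a\}$) and verifies \cref{prop:characMLS} by splitting the proper subsets according to whether they contain $a$. No gaps.
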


\begin{proof}
Let $a \in V$. Suppose $a$ is contained in a minimal local set $A \se V$. Then $A \sm \{a\}$ is full cut-rank but $A$ is not.
Conversely, suppose that there exists $A\se V$ such that $A$ is full cut-rank but $A\cup \{a\}$ is not (obviously, $a \notin A$). According to \cref{prop:cutrank_prop2}, any set $B \se A$ is full cut-rank. Among all sets $B \se A$ such that $B \cup \{a\}$ is not full cut-rank (there is at least one: $A$), take one of them that is minimal by inclusion. Call it $C$. Let us show that $C \cup \{a\}$ is a minimal local set. By hypothesis, $C \cup \{a\}$ is not full cut-rank. Every subset of $C$ is full cut-rank, because such a set is also a subset of $A$. Also, every subset of $C \cup \{a\}$ containing $a$ is full cut-rank, by minimality of $C$. So, according to \cref{prop:characMLS},  $C \cup \{a\}$ is a minimal local set. So $a$ is contained in some minimal local set.
\end{proof}

Notice that a set is full cut-rank when the cut-rank function  is locally strictly increasing in the following sense:

\begin{lemma}
    \label{lemma:localdec}Given a graph $G=(V,E)$, 
   $C \se V$ is full cut-rank if and only if $\forall a\in C$, $\cutrk(C\sm \{a\}) < \cutrk(C)$.
\end{lemma}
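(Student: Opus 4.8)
The plan is to prove \cref{lemma:localdec} directly from the definition of full cut-rank together with property $(iii)$ of \cref{prop:cutrank_prop2}, which states that every subset of a full-cut-rank set is full cut-rank. The statement is an equivalence, so I would handle the two implications separately, and I expect both directions to be short.

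First I would prove the forward direction. Suppose $C$ is full cut-rank, i.e. $\cutrk(C)=|C|$. Fix any $a\in C$. By property $(iii)$ of \cref{prop:cutrank_prop2}, the subset $C\sm\{a\}$ is also full cut-rank, so $\cutrk(C\sm\{a\})=|C|-1 < |C| = \cutrk(C)$. Since $a$ was arbitrary, this gives $\cutrk(C\sm\{a\})<\cutrk(C)$ for every $a\in C$, as required.

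For the converse, I would argue by contraposition, or equivalently establish the contrapositive using linear boundedness. Suppose $C$ is not full cut-rank, so by linear boundedness $\cutrk(C)<|C|$, i.e. $\cutrk(C)\ls |C|-1$. I want to exhibit some $a\in C$ with $\cutrk(C\sm\{a\})\gs\cutrk(C)$. The cleanest route is to observe that removing a single element can drop the cut-rank by at most one: indeed, by submodularity applied to $C\sm\{a\}$ and any singleton, or more simply since $\cutrk$ takes values in $\NN$ and is monotone-like under single-element removal, one has $\cutrk(C)\ls \cutrk(C\sm\{a\})+1$ for all $a$. If additionally \emph{every} $a\in C$ satisfied the strict decrease $\cutrk(C\sm\{a\})<\cutrk(C)$, then combined with the value being in $\NN$ this forces $\cutrk(C\sm\{a\})=\cutrk(C)-1=|C\sm\{a\}|$ under the full-cut-rank hypothesis, which is exactly what fails when $C$ is not full cut-rank; so at least one $a$ must violate the strict decrease.

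The main obstacle is establishing the single inequality $\cutrk(C)\ls\cutrk(C\sm\{a\})+1$, i.e. that deleting one vertex decreases the cut-rank by at most one; this should follow from submodularity together with the observation that $\cutrk(\{a\})\ls 1$ (since a single row has rank at most one), giving $\cutrk(C)+\cutrk(\emptyset)\ls \cutrk(C\sm\{a\})+\cutrk(\{a\})\ls \cutrk(C\sm\{a\})+1$. With this in hand, the converse is immediate: if $C$ is not full cut-rank then $\cutrk(C)\le|C|-1$, and since every proper subset can have cut-rank at most $|C|-1$ anyway, I need only find one $a$ where the rank does not strictly drop, which the at-most-one-drop inequality guarantees cannot happen for all $a$ simultaneously without forcing $C$ to be full cut-rank. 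I would present the argument in the compact contrapositive form to keep it to a few lines.
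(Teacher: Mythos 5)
Your forward direction is correct and is exactly the paper's: property $(iii)$ of \cref{prop:cutrank_prop2} gives $\cutrk(C\sm\{a\})=|C|-1<|C|=\cutrk(C)$. Your auxiliary inequality $\cutrk(C)\ls\cutrk(C\sm\{a\})+1$ is also correct (submodularity with $A=C\sm\{a\}$ and $B=\{a\}$, plus $\cutrk(\{a\})\ls 1$ from linear boundedness).

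The converse, however, has a genuine gap. From the strict-decrease hypothesis together with the one-step-drop bound you correctly deduce $\cutrk(C\sm\{a\})=\cutrk(C)-1$ for every $a\in C$. But you then assert that this \enquote{cannot happen for all $a$ simultaneously without forcing $C$ to be full cut-rank} --- that is precisely the statement being proved, and nothing in your argument delivers it. If $C$ is not full cut-rank, i.e.\ $\cutrk(C)\ls|C|-1$, then $\cutrk(C\sm\{a\})=\cutrk(C)-1\ls|C\sm\{a\}|-1$ merely says that each $C\sm\{a\}$ fails to be full cut-rank as well; that is not a contradiction. The phrase \enquote{under the full-cut-rank hypothesis} in your write-up is the tell: you invoke the conclusion inside the contrapositive argument. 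To close the gap you must propagate the hypothesis to smaller sets, which is what the paper does: applying submodularity to the pair $C\sm\{a\}$, $C\sm\{b\}$ shows that the strict-decrease property of $C$ is inherited by $C\sm\{b\}$, so by induction on $|C|$ the set $C\sm\{b\}$ is full cut-rank, and then $\cutrk(C)>\cutrk(C\sm\{b\})=|C|-1$ together with linear boundedness yields $\cutrk(C)=|C|$. Your one-step-drop inequality alone cannot substitute for this induction.
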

\begin{proof}
$(\Rightarrow)$ follows from \cref{prop:cutrank_prop2}.
The proof of $(\Leftarrow)$ is by induction on the size of $C$. The property is obviously true when $|C|\ls 1$. 
Assume $|C|\gs 2$, for any $b\in C$ and any $a\in C\sm \{b\}$, using {submodularity},
\begin{equation*}
    \begin{split}
        \cutrk((C\sm \{a\}) \cup (C\sm \{b\}))\\ +~\cutrk((C\sm \{a\}) \cap (C\sm \{b\})) & \ls \cutrk(C\sm \{a\}) + \cutrk(C\sm \{b\}) \\
        \cutrk(C) + \cutrk(C\sm \{a,b\}) & <  \cutrk(C) + \cutrk(C\sm \{b\})\\
        \cutrk((C\sm \{b\})\sm\{a\}) & <  \cutrk(C\sm \{b\})  
    \end{split}
\end{equation*}
 By induction hypothesis, $C\sm \{b\}$ is full cut-rank, so $C$ is also full cut-rank as $\cutrk(C)>\cutrk(C\sm \{b\})=|C\sm \{b\}| = |C|-1$.
\end{proof}

When a set is not full cut-rank, one can find a sequence of nested subsets with a larger cut-rank, leading to a full cut-rank subset:

\begin{lemma} \label{lemma:exists_full}
    Given a graph $G=(V,E)$,  for any $A\se V$, there exist $B_0\varsubsetneq \ldots \varsubsetneq B_{|A|-\cutrk(A)-1} \varsubsetneq B_{|A|-\cutrk(A)} = A$ such that $\forall i$, $|B_i| = \cutrk(A)+i$ and $\cutrk(B_i)\gs \cutrk(A)$. In particular, $B_0$ is full cut-rank.
\end{lemma}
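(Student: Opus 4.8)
The plan is to build the chain from the top downward, starting from $B_{|A|-\cutrk(A)} = A$ and deleting one vertex at a time, never letting the cut-rank fall below $\cutrk(A)$. Write $k = \cutrk(A)$. Everything reduces to the following one-step claim: whenever $B\se V$ satisfies $\cutrk(B)\gs k$ and $|B|>k$, there is a vertex $a\in B$ with $\cutrk(B\sm\{a\})\gs k$. Granting this, I set $B_{i-1}=B_i\sm\{a\}$ for such an $a$ and obtain, by downward induction, a strictly decreasing chain in which each $B_i$ has size exactly $k+i$ and cut-rank at least $k$. The terminal set $B_0$ then has size $k$ and cut-rank at least $k$, so by linear boundedness $\cutrk(B_0)=k=|B_0|$, i.e.\ $B_0$ is full cut-rank, as required.

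The crucial ingredient for the one-step claim is that deleting a single vertex cannot drop the cut-rank by more than one. This comes straight from submodularity (\cref{prop:cutrank_prop}) applied to $B\sm\{a\}$ and $\{a\}$, whose union is $B$ and whose intersection is $\emptyset$: one gets $\cutrk(B)+\cutrk(\emptyset)\ls \cutrk(B\sm\{a\})+\cutrk(\{a\})$, and since $\cutrk(\emptyset)=0$ (\cref{prop:cutrank_prop2}) and $\cutrk(\{a\})\ls 1$ by linear boundedness, this yields $\cutrk(B\sm\{a\})\gs \cutrk(B)-1$ for every $a\in B$.

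With this fact in hand the one-step claim splits into two cases. If $\cutrk(B)\gs k+1$, then deleting any vertex leaves cut-rank at least $(k+1)-1=k$, so any $a$ works. If instead $\cutrk(B)=k$, then since $|B|>k=\cutrk(B)$ the set $B$ is not full cut-rank; by the contrapositive of \cref{lemma:localdec} there exists $a\in B$ with $\cutrk(B\sm\{a\})\gs \cutrk(B)=k$. In either case a suitable vertex exists, completing the construction. (Alternatively, the two cases merge into a single contradiction: if every deletion dropped the cut-rank strictly below $k\ls\cutrk(B)$, then every deletion would strictly decrease the cut-rank, forcing $B$ to be full cut-rank by \cref{lemma:localdec}, whence $\cutrk(B\sm\{a\})\gs|B|-1\gs k$, a contradiction.)

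The argument is essentially routine once the ``drops by at most one'' estimate is isolated. The only place requiring genuine care -- and the step I expect to be the main obstacle -- is the boundary case $\cutrk(B)=k$: there a careless deletion could lower the cut-rank to $k-1$, so one really must invoke \cref{lemma:localdec} to guarantee the existence of a deletion that holds the cut-rank at $k$ rather than trusting an arbitrary choice.
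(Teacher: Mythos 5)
Your proof is correct and follows essentially the same route as the paper's: both reduce the lemma to the one-step claim that a vertex can always be deleted from $B$ without the cut-rank falling below $\cutrk(A)$, and both settle the critical case via \cref{lemma:localdec}. The only difference is cosmetic --- the paper splits on whether $B$ is full cut-rank (handling the full case with \cref{prop:cutrank_prop2}), whereas you split on whether $\cutrk(B)>\cutrk(A)$ (handling that case with the submodularity estimate $\cutrk(B\sm\{a\})\gs\cutrk(B)-1$); the two case analyses are interchangeable.
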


\begin{proof}
    Note this makes sense because of the {linear boundedness} of $\cutrk$, which ensures that $|A|-\cutrk(A) \gs 0$. It is enough to show that for any $B\se A$, if $\cutrk(B)\gs \cutrk(A)$ and $|B| >\cutrk(A)$ then  $\exists a\in B$ such that $\cutrk(B\sm \{a\})\gs \cutrk(A)$. There are two cases: (i) if $B$ is full cut-rank then according to \cref{prop:cutrank_prop2} for any $a\in B$, $B\sm\{a\}$ is also full cut-rank so $\cutrk(B\sm\{a\}) = |B|-1\gs \cutrk(A)$; (ii) if $B$ is not full cut-rank then according to \cref{lemma:localdec}, $\exists a\in B$ such that $\cutrk(B\sm \{a\}) \gs \cutrk(B)$, so $\cutrk(B\sm \{a\}) \gs \cutrk(A)$.
\end{proof}

An interesting consequence of \cref{lemma:exists_full} is that for any full-cut-rank set, there exists a disjoint full-cut-rank set of same cardinality.

\begin{corollary}
    \label{cor:full_avoid}Given a graph $G=(V,E)$, 
    for any full-cut-rank set $A \se V$, $\exists B \in {V \sm A \choose |A|}$ that is full cut-rank\footnote{Given a set $K$ and an integer $k$, ${K \choose k}$ refers to $\{B \se K ~|~|B|=k\}$.}.
\end{corollary}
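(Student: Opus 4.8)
The plan is to combine the symmetry of the cut-rank function with \cref{lemma:exists_full}. Starting from a full-cut-rank set $A$, so that $\cutrk(A)=|A|$, the first step is to pass to the complement via symmetry (\cref{prop:cutrank_prop}): $\cutrk(V\sm A)=\cutrk(A)=|A|$. This is the key move, as it relocates the cut-rank value $|A|$ onto a set that is disjoint from $A$, namely $V\sm A$.

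Next I would apply \cref{lemma:exists_full} to the set $V\sm A$. That lemma guarantees a chain of nested subsets terminating in $V\sm A$, whose smallest element $B_0$ is full cut-rank with $|B_0|=\cutrk(V\sm A)$. Since $\cutrk(V\sm A)=|A|$ by the previous step, $B_0$ is a full-cut-rank subset of $V\sm A$ of cardinality exactly $|A|$. Taking $B=B_0$ then yields the desired element of $\binom{V\sm A}{|A|}$, completing the argument.

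I do not anticipate a genuine obstacle here, as the corollary is essentially an immediate consequence of the two ingredients once symmetry is invoked. The only points warranting a moment's care are the cardinality bookkeeping (checking $|B_0|=\cutrk(V\sm A)+0=|A|$ so that $B_0$ lands in the right binomial set) and the implicit consistency that $|A|\le |V\sm A|$, which is automatic since linear boundedness forces $|A|=\cutrk(V\sm A)\le |V\sm A|$, guaranteeing there is enough room inside $V\sm A$ to house a set of size $|A|$. Degenerate cases such as $A=\emptyset$ are handled trivially by the same reasoning.
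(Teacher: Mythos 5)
Your argument is exactly the paper's proof: apply symmetry to get $\cutrk(V\sm A)=|A|$, then invoke \cref{lemma:exists_full} on $V\sm A$ to extract the full-cut-rank subset $B_0$ of size $\cutrk(V\sm A)=|A|$. The extra cardinality bookkeeping you note is correct and the proposal is sound.
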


\begin{proof}
Let $A \subseteq V$ be a full-cut-rank set, i.e. $\cutrk(A) = |A|$. Using {symmetry}, $\cutrk(V \sm A) = |A|$. Then applying \cref{lemma:exists_full} on $V \sm A$ yields the result.
\end{proof}

We are now ready to conclude our reasoning and prove that any vertex $a$ of an arbitrary graph $G=(V,E)$ is contained in some minimal local set.
Let $r$ be the maximal cardinality of a full-cut-rank set in $V$, i.e. $r=\max\{|A|~|~\cutrk(A)=|A|\}$, and let $A\se V$ be a full-cut-rank set of size $r$. For any vertex $a\in V$, 
\begin{itemize}[topsep=0.2em,itemsep=0.2em]
\item If $a\notin A$ then by maximality of $r$, $A\cup\{a\}$ is not full cut-rank, thus by \cref{lemma:characinMLS}, $a$ is contained in a minimal local set.
\item If $a\in A$, then according to \cref{cor:full_avoid}, there exists a full-cut-rank set $B\in {V \sm A \choose r}$. By maximality of $r$, $B\cup\{a\}$ is not full cut-rank, so, according to \cref{lemma:characinMLS}, $a$ is contained in a minimal local set. 
\end{itemize}

This concludes the proof.

\subsection{A polynomial-time algorithm for  finding an MLS cover}

\label{sec:algorithm}

In this subsection, we turn \cref{thm:MLS_cover} into a polynomial-time algorithm that generates an MLS cover for any given input graph. We provide in the following a textual description of the algorithm, the corresponding pseudo-code is given in \cref{app:algorithm}.

\smallskip\noindent{\bf Description of the algorithm.} 
The core of the algorithm consists, given a graph $G$ and a vertex $a$, in producing a minimal local set that contains $a$. One can then iterate on uncovered vertices to produce an MLS cover of the graph. 

Given a vertex $a$, a minimal local set that contains $a$ is produced as follows, using essentially two stages:

($i$) First, the algorithm produces a full-cut-rank set $A$ such that $A \cup\{a\}$ is not full cut-rank. The procedure consists in starting with an empty set $A$ -- which is full cut-rank -- and then increasing the size of $A$, in a full-cut-rank preserving manner, until $A\cup \{a\}$ is not full cut-rank. To increase the size of $A$, notice that  if $A\cup \{a\}$ is full cut-rank then,  
according to \cref{lemma:exists_full}, there exists a disjoint set $A'$ -- so in particular $a\notin A'$ -- such that $|A'|=|A|+1$ and $A'$ is full cut-rank. The proof of \cref{lemma:exists_full} is constructive: starting from $C= V\setminus (A\cup \{a\})$ some vertices are removed  one-by-one  from $C$ to get the set $A'$. To produce such a set $A'$ from $C$, there are $O(n)$ vertices to remove, at each step there are $O(n)$ candidates, and deciding whether a vertex can be removed  costs a constant number  of evaluations of the cut-rank function. 

($ii$) Given a full-cut-rank set $A$ such that $A\cup \{a\}$ is not full cut-rank, \cref{lemma:characinMLS} guarantees the existence of a minimal local set that contains $a$. Notice that the proof of \cref{lemma:characinMLS} consists in finding, among all  subsets $B$ of $A$ such that $B$ is full cut-rank but $B\cup \{a\}$ is not, one that is minimal by inclusion. To do so, one can start from $A$ and remove one-by-one vertices from $A$ until reaching a set that is minimal by inclusion. It takes $O(n)$ steps\footnote{We only need to consider each vertex at most once. Indeed, at each step, if $A \sm \{b\}$ is full cut-rank, for any $B \se A$, $B \sm \{b\}$ is also full cut-rank by \cref{prop:cutrank_prop2}.}, each step involving a constant number of evaluations of the cut-rank function. 

\smallskip\noindent{\bf Complexity.} Stage $(i)$ uses $O(n^3)$ evaluations of the cut-rank function, and stage $(ii)$ only $O(n)$, so the overall algorithm uses $O(n^4)$ evaluations of the cut-rank function. Using Gaussian elimination, the cut-rank can be computed in $O(n^\omega)$ field operations \cite{Bunch1974,IBARRA198245} where $\omega < 2.38$.

\subsection{Generalization to any symmetric, linearly bounded and submodular  function with values in $\mathbb{N}$} 
\label{subsec:MLSCover_generalization}

It has to be noted that the existence of an MLS cover and the fact that one can be computed efficiently remains true for any notion equivalent to minimal local sets defined with a function that shares some properties of the cut-rank function. Precisely, given a set $V$, and a function $\mu:2^V\to \NN$ that satisfies symmetry, linear boundedness, and submodularity, define a $\mu$-minimal local set $A\se V$ as a set that satisfies $\mu(A) \ls |A| - 1$ and $\forall B \varsubsetneq A$, $\mu(B) = |B|$. Then $V$ is covered by its $\mu$-minimal local sets, and a "$\mu$-MLS cover" can be computed efficiently (given that the values of $\mu$ can be computed efficiently). Indeed, every proof we give in \cref{sec:MLS_cover} (as the proof of \cref{prop:cutrank_prop2}) uses only the fact that $\cutrk:2^V\to \NN$ satisfies symmetry, linear boundedness, and submodularity.

An example of a function satisfying such properties is the connectivity function of a matroid. Given a matroid $(E,r)$ where $E$ is the ground set and $r$ is the rank function, the connectivity function is the function $\lambda$ that maps any set $X \subseteq E$ to $\lambda(X) = r(X)+r(E\sm X)-r(E)$.



\section{Extension to $q$-multigraphs}

\label{sec:extension}

While (simple, undirected) graphs correspond to quantum qubit graphs states, their natural higher dimension extension are multigraphs, which correspond to quantum qudit graph states \cite{Beigi2006,Ketkar2006}. A quick introduction to multigraphs and qudit graph states can be found in \cite{marin2013}. Here we explain how our main result extends to $q$-multigraphs.

\begin{definition}[$q$-multigraphs]
    Given a prime number $q$, a $q$-multigraph $G$ is a pair $(V, \Gamma)$ where $V$ is the set of vertices and $\Gamma : V \times V \longrightarrow \mathbb{F}_q$ is the adjacency matrix of $G$: for any $u,v \in G$, $\Gamma(u,v)$ is the multiplicity of the edge $(u,v)$ in $G$.
\end{definition}

Here we consider undirected simple $q$-multigraphs, i.e. $\forall u,v \in V$, $\Gamma(u,v) = \Gamma(v,u)$ and $\Gamma(u,u)=0$.
The cut-rank function on $q$-multigraphs is defined similar to graphs (which correspond to $2$-multigraphs): 

\begin{definition}[Cut-rank function on $q$-multigraphs]
    Let $G=(V, \Gamma)$ be a $q$-multigraph. For $A \se V$, let the cut-matrix $\Gamma_A = ((\Gamma_A)_{ab}: a \in A\text{, } b \in V\sm A)$ be the matrix with coefficients in $\mathbb{F}_q$ such that $\Gamma_{ab} = \Gamma(a,b)$. The cut-rank function of $G$ is defined as
    \begin{align*}
        \cutrk\colon 2^V & \longrightarrow \mathbb{N}\\
        A &\longmapsto \textbf{rank}_{\mathbb{F}_q}(\Gamma_A)
    \end{align*}
\end{definition}

It is not hard to see that the cut-rank function on $q$-multigraphs satisfies the properties of \cref{prop:cutrank_prop}, namely {symmetry}, {linear boundedness}, and {submodularity}. As explained in \cref{subsec:MLSCover_generalization}, \cref{thm:MLS_cover} extends to any function with values in $\mathbb{N}$ that satisfies these properties. To extend minimal local sets to $q$-multigraphs, the easier is to use their characterisation in terms of the cut-rank function (see \cref{prop:characMLS}).

\begin{definition}[Minimal local sets and MLS covers on $q$-multigraphs]
    Given $G=(V, \Gamma)$ a $q$-multigraph,
    \begin{itemize}[topsep=0.2em,itemsep=0.2em]
 \item    $A \se V$ is a \emph{minimal local} set if $A$ is a non-full-cut-rank set whose proper subsets are all full cut-rank, i.e. $\forall a \in A, \cutrk(A) \ls \cutrk(A \sm a) = |A|-1$,
 \item    $\mathcal L\se 2^V$ is an \emph{MLS cover} if $\forall L\in \mathcal L$, $L$ is a minimal local set of $G$, and $\forall u\in V$, $\exists L \in \mathcal L$ such that $u\in L$.
 \end{itemize}
\end{definition}

Local complementation naturally extends to $q$-multigraphs (see \cite{marin2013} for a definition), and two locally equivalent $q$-multigraphs have the same cut-rank function \cite{Kante2007}. This implies that again, in the case of $q$-multigraphs, minimal local sets are invariant under local complementation. Our main result also extends to $q$-multigraphs.

\begin{theorem}
    Any q-multigraph has an MLS cover. 
\end{theorem}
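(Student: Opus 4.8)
The plan is to reduce this statement to the general framework of \cref{subsec:MLSCover_generalization}, thereby avoiding any need to re-run the arguments of \cref{sec:MLS_cover} by hand. On a $q$-multigraph, minimal local sets and MLS covers are defined purely in terms of the cut-rank function, and \cref{subsec:MLSCover_generalization} shows that for \emph{any} ground set $V$ and \emph{any} function $\mu:2^V\to\NN$ that is symmetric, linearly bounded and submodular, the $\mu$-minimal local sets cover $V$ (and such a cover is computable in a polynomial number of evaluations of $\mu$). Since every step of \cref{sec:MLS_cover} uses only these three properties together with $\NN$-valuedness, it suffices to check that the $q$-multigraph cut-rank function can play the role of $\mu$, i.e.\ that it satisfies the three properties of \cref{prop:cutrank_prop}.

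Two of the properties are immediate over $\mathbb{F}_q$. The cut-matrix $\Gamma_A$ has exactly $|A|$ rows, so $\cutrk(A)=\textbf{rank}_{\mathbb{F}_q}(\Gamma_A)\ls|A|$, giving linear boundedness; and a rank is a nonnegative integer, so $\cutrk$ is $\NN$-valued. For symmetry, the hypothesis $\Gamma(u,v)=\Gamma(v,u)$ makes $\Gamma_{V\sm A}$ the transpose of $\Gamma_A$, and since the rank of a matrix over any field is invariant under transposition, $\cutrk(V\sm A)=\cutrk(A)$.

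The only property requiring genuine linear algebra --- and hence the main obstacle --- is submodularity, the subtlety being that the proof must not secretly rely on a feature special to $\mathbb{F}_2$. I would give a field-independent argument. For a row set $X$ let $H_X=\{z\in\mathbb{F}_q^{V}\mid \langle\Gamma_i,z\rangle=0 \text{ for all } i\in X\}$ be the solution space of the rows indexed by $X$, and for a column set $Y$ let $L_Y=\{z\mid \mathrm{supp}(z)\se Y\}$ be the corresponding coordinate subspace; then $\cutrk(A)=|V\sm A|-\dim(H_A\cap L_{V\sm A})$. Writing the submodular inequality for $A,B\se V$ in this nullity form, the cardinality terms cancel because $|V\sm A|+|V\sm B|=|\overline{A\cap B}|+|\overline{A\cup B}|$, and using $H_{A\cup B}=H_A\cap H_B$, $L_{\overline{A\cap B}}=L_{V\sm A}+L_{V\sm B}$ and $H_{A\cap B}\supseteq H_A+H_B$, the whole statement reduces to the purely subspace-theoretic inequality
\[\dim\big((A_1+A_2)\cap(B_1+B_2)\big)\;\gs\;\dim\big((A_1\cap B_1)+(A_2\cap B_2)\big),\]
where $A_1=H_A$, $A_2=H_B$, $B_1=L_{V\sm A}$ and $B_2=L_{V\sm B}$. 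This inequality is immediate, since both $A_1\cap B_1$ and $A_2\cap B_2$ lie in $(A_1+A_2)\cap(B_1+B_2)$, so their sum does too; the modular law $\dim(C\cap C')+\dim(C+C')=\dim C+\dim C'$ then recovers the original form. Nothing here uses $q=2$, so submodularity holds over every $\mathbb{F}_q$. (Equivalently, one may simply observe that the Oum--Seymour proof cited for \cref{prop:cutrank_prop} is already field-agnostic.)

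With symmetry, linear boundedness, submodularity and $\NN$-valuedness in hand, the generalization of \cref{subsec:MLSCover_generalization} applies with $\mu=\cutrk$: every vertex of the $q$-multigraph lies in a $\cutrk$-minimal local set, which by the definition above is exactly a minimal local set of the $q$-multigraph. Hence the $q$-multigraph admits an MLS cover, which moreover the algorithm of \cref{sec:algorithm} computes in polynomial time.
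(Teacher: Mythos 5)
Your proposal is correct and follows essentially the same route as the paper: reduce to the generalization of \cref{subsec:MLSCover_generalization} by checking that the $\mathbb{F}_q$ cut-rank is symmetric, linearly bounded, submodular and $\NN$-valued. The only difference is that where the paper asserts these properties as ``not hard to see,'' you supply an explicit, field-independent proof of submodularity via the nullity formula $\cutrk(A)=|V\sm A|-\dim(H_A\cap L_{V\sm A})$ and the subspace inequality $\dim\big((A_1+A_2)\cap(B_1+B_2)\big)\gs\dim\big((A_1\cap B_1)+(A_2\cap B_2)\big)$, which checks out.
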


\section{Conclusion}

In this work, we have shown that any graph admits an MLS cover, thus every vertex is contained in a minimal local set. We have also introduced a polynomial time algorithm for producing an MLS cover of a given graph. Our results rely on a cut-rank-based characterisation of minimal local sets, and the peculiar properties of the cut-rank function. As a consequence, our results extend to any set function that is symmetric, linearly bounded, and submodular; this is how we generalized our results to $q$-multigraphs for instance. We have also proved a few results on the size and the number of minimal local sets in a graph, which open a few directions for future work:
 \begin{itemize}[topsep=0.2em,itemsep=0.2em]
\item  One can define the minimal local set number of a graph as the smallest number  of minimal local sets needed to cover all vertices of a graph; the complexity of the associated decision problem is unknown, up to our knowledge. \item Given an MLS cover $\{L_i\}$ of a graph, one can define its intersection graph which vertices are the minimal locals sets $L_i$ and two vertices are connected if the corresponding minimal local sets have a non-empty intersection. What is the structure of such intersection graph? Notice for instance that the cycle of order $4$ has a single MLS cover which intersection graph is empty. Is there always an MLS cover which intersection graph is a forest? Can it be exploited to speed up some algorithms? Could it be related to the rank-width parameter?
\item As the number of minimal local sets of a graph can be exponentially large, one can wonder whether there exists a polynomial delay algorithm for enumerating all minimal local sets of a graph. 
\end{itemize}

\section*{Acknowlegments} 

The authors thank Mathilde Bouvel and Noé Delorme for fruitful discussions. This work is supported by the PEPR integrated project EPiQ ANR-22-PETQ-0007 part of Plan France 2030, by the STIC-AmSud project Qapla’ 21-STIC-10, and by the European projects NEASQC and HPCQS.

\bibliographystyle{plainurl}
\bibliography{ref}

\appendix

\section{Complete proof of Proposition \ref{prop:MLSbounds}}
\label{app:proof_MLSbounds}

\MLSbounds*

\begin{proof}

First note that any minimal local set has size at most $\lfloor n/2 \rfloor + 1$. Indeed, by \cref{prop:characMLS}, a necessary condition for a set $A \se V$ to be a minimal local set is that any proper subset of $A$ is full cut-rank. And, by \cref{prop:cutrank_prop2}, any full-cut-rank set has size at most $ \lfloor n/2 \rfloor$. This proves the result for $n \neq 0 \bmod 4$. Let us now suppose that $n = 0 \bmod 4$, and suppose by contradiction that there exists a minimal local set $A$ of size $n/2 + 1$. Using {symmetry} and {linear boundedness}, $\cutrk(A) = \cutrk(V \sm A) \ls |V \sm A| = n/2 - 1$. Then $|A| - \cutrk(A) \gs 2$. As $A$ is a minimal local set, $|A| - \cutrk(A) =$ 1 or 2, and the latter can only occur when $|A|$ is even: this result was proved in \cite{VandenNest05}, and we provide a graph-theoretical proof in \cref{app:1or3}. This leads to a contradiction, as $|A| = n/2 + 1$ is odd.\\
We show below that this bound is tight, in the sense that for any $n$ there exists a graph of order $n$ that has a minimal local set whose size matches the bound. For this purpose, we explicitly construct graphs of arbitrary order that contain a minimal local set whose size matches the bound.

{\bf Case $n$ odd.}\\ 
Let $n = 2 m + 1$ be an odd number. We provide in \cref{fig:bounds1} a construction of a graph $G$ that contains a minimal local set of size $m+1$.

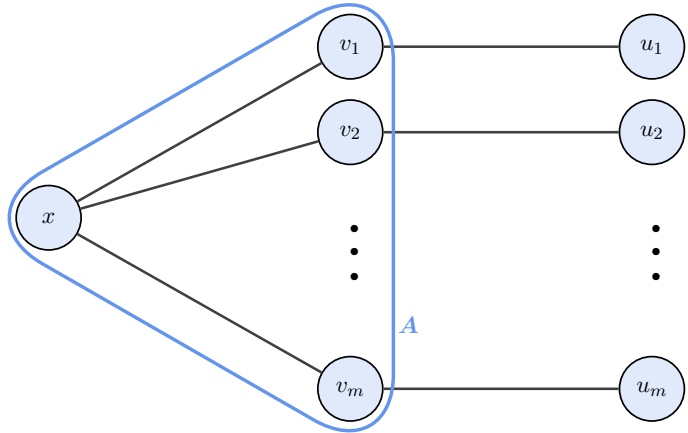
\begin{figure}[h!]
\centering
\scalebox{\valuescale}{
\begin{tikzpicture}[scale = 0.7]
\begin{scope}[every node/.style={circle,minimum size=30pt,thick,draw, fill=cornflowerblue!20}]
    \node (A) at (0,4) {$x$};
    \node (B) at (7,8) {$v_1$};
    \node (C) at (7,6) {$v_2$};
    \node (D) at (7,0) {\scalebox{1}{$v_{m}$}};
    \node (E) at (14,8) {$u_1$};
    \node (F) at (14,6) {$u_2$} ;
    \node (G) at (14,0) {\scalebox{1}{$u_{m}$}} ;
\end{scope}
\begin{scope}[every node/.style={},
                every edge/.style={draw=darkgray,very thick}]
    \path [-] (A) edge node {} (B);
    \path [-] (A) edge node {} (C);
    \path [-] (A) edge node {} (D);
    \path [-] (B) edge node {} (E);
    \path [-] (C) edge node {} (F);
    \path [-] (D) edge node {} (G);
\end{scope}
\begin{scope}[style={draw=blue, ultra thick}]
    \draw (7.1,3.2) node[rotate = 90](){\scalebox{2}{$\boldsymbol{\ldots}$}};
    \draw (14.1,3.2) node[rotate = 90](){\scalebox{2}{$\boldsymbol{\ldots}$}};
\end{scope}
\begin{scope}[style={draw=cornflowerblue, ultra thick}]
    \draw[rounded corners=15mm] (-2,4)--(8,9.8)--(8,-1.8)--cycle;
    \draw (8.35,1.5) node[text=cornflowerblue](){$\boldsymbol{A}$};
\end{scope}
\end{tikzpicture}
}
\caption{Illustration of a minimal local set of size $\lceil n/2 \rceil$ in $G$.}
\label{fig:bounds1}
\end{figure}

$A$ is a local set, as $A = D \cup Odd_G(D)$ with $D = \{x\}$.
In particular, $A$ is a minimal local set. Indeed, let us assume that there exists a subset $D'$ such that $D' \cup Odd_G(D') \varsubsetneq A$. It cannot contain any of the $v_i$'s as we would then also have $u_i \in Odd_G(D')$. So $D' = \emptyset$ or $\{x\}$, leading to a contradiction.

{\bf Case $n$ even and $n/2$ odd.}\\  
Let $n = 2 m$ be an even number such that $m$ is odd. We provide in \cref{fig:bounds2} a construction of a graph $G$ that contains a minimal local set of size $m+1$.
\begin{figure}[h!]
\centering
\scalebox{\valuescale}{
\begin{tikzpicture}[scale = 0.7]
\begin{scope}[every node/.style={circle,minimum size=30pt,thick,draw,fill=cornflowerblue!20}]
    \node (X) at (0,10) {$x$};
    \node (U) at (7,10) {$y$};

    \node (B) at (0,8) {$v_1$};
    \node (C) at (0,6) {$v_2$};
    \node (D) at (0,0) {\scalebox{0.8}{$v_{m-1}$}};

    \node (E) at (7,8) {$u_1$};
    \node (F) at (7,6) {$u_2$} ;
    \node (G) at (7,0) {\scalebox{0.8}{$u_{m-1}$}} ;
\end{scope}
\begin{scope}[every node/.style={},
                every edge/.style={draw=darkgray,very thick}]
    \path [-] (X) edge node {} (U);
    \path [-] (B) edge node {} (E);
    \path [-] (C) edge node {} (F);
    \path [-] (D) edge node {} (G);

    \path [-] (X) edge node {} (B);
    \path [-] (U) edge node {} (E);
    \path [-] (B) edge node {} (C);
    \path [-] (E) edge node {} (F);

    \path [-] (X) edge[bend right=50] node {} (C);
    \path [-] (X) edge[bend right=50] node {} (D);

    \path [-] (U) edge[bend left=50] node {} (F);
    \path [-] (U) edge[bend left=50] node {} (G);

    \path (C) edge node {} (0,4.8);
    \path (F) edge node {} (7,4.8);
    \path (D) edge node {} (0,1.2);
    \path (G) edge node {} (7,1.2);

\end{scope}
\begin{scope}[style={draw=blue, ultra thick}]
    \draw (0,3.2) node[rotate = 90](){\scalebox{2}{$\boldsymbol{\ldots}$}};
    \draw (7,3.2) node[rotate = 90](){\scalebox{2}{$\boldsymbol{\ldots}$}};
\end{scope}
\begin{scope}[style={draw=cornflowerblue, ultra thick}]
    \draw (7,9) arc(-90:90:1) -- (-0,11) arc(90:180:1) -- (-1,0) arc(-180:-90:1) -- (0,-1) arc(-90:0:1) -- (1,8) arc(180:90:1) -- (7,9);
    \draw (1.3,1.8) node[text=cornflowerblue](){$\boldsymbol{A}$};
\end{scope}
\end{tikzpicture}
}
\caption{Illustration of a minimal local set of size $ n/2+1$ in $G$.}
\label{fig:bounds2}
\end{figure}
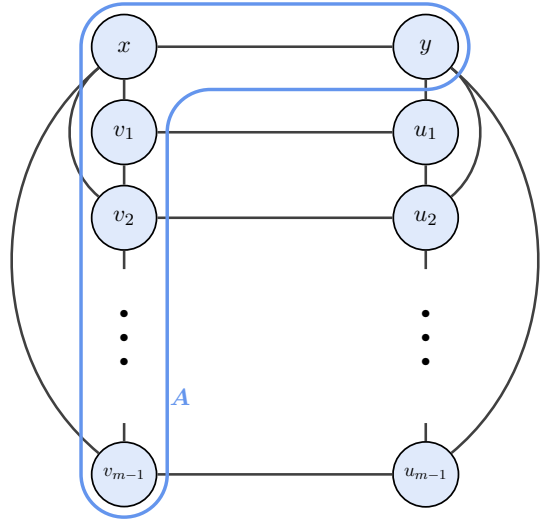

$A$ is a local set, as $A = D \cup Odd_G(D)$ with $D = \{x\}$. 
In particular, $A$ is a minimal local set. Indeed, let us assume that there exists a subset $D'$ such that $D' \cup Odd_G(D') \varsubsetneq A$. If it contains $y$, it also needs to contain every $v_i$'s, so that none of the $u_i$'s enter $Odd_G(D')$. At this point, either $x \in D'$, or $x \notin D'$: in both cases, $D' \cup Odd_G(D') = A$ (because $m$ is odd), leading to a contradiction. If it does not contain $y$, it cannot contain any of the $v_i$'s, so that none of the $u_i$'s enter $Odd_G(D')$. So $D' = \{x\} = D$, leading to a contradiction.

{\bf Case $n$ even and $n/2$ even.}\\ 
Let $n = 2 m$ be an even number such that $m$ is even. We provide in \cref{fig:bounds3} a construction of a graph $G$ that contains a minimal local set of size $m$.

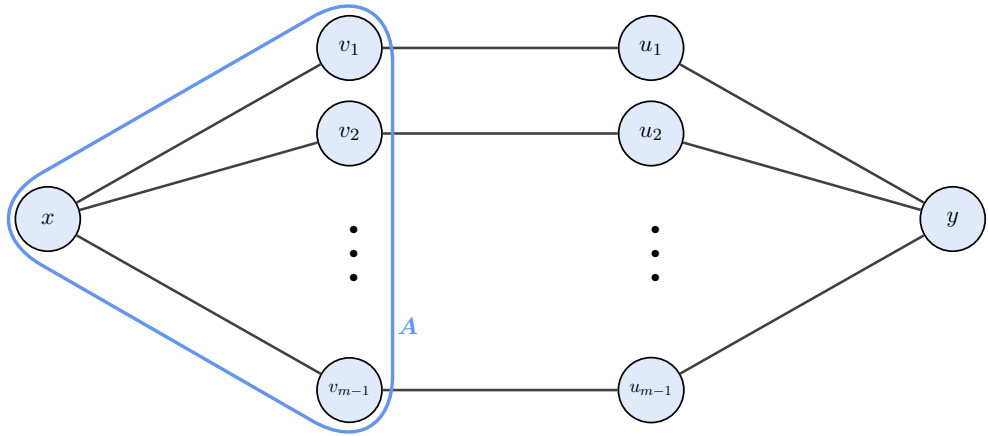
\begin{figure}[h!]
\centering
\scalebox{\valuescale}{
\begin{tikzpicture}[scale = 0.7]
\begin{scope}[every node/.style={circle,minimum size=30pt,thick,draw,fill=cornflowerblue!20}]
    \node (A) at (0,4) {$x$};
    \node (B) at (7,8) {$v_1$};
    \node (C) at (7,6) {$v_2$};
    \node (D) at (7,0) {\scalebox{0.8}{$v_{m-1}$}};
    \node (E) at (14,8) {$u_1$};
    \node (F) at (14,6) {$u_2$} ;
    \node (G) at (14,0) {\scalebox{0.8}{$u_{m-1}$}} ;
    \node (H) at (21,4) {$y$} ;
\end{scope}
\begin{scope}[every node/.style={},
                every edge/.style={draw=darkgray,very thick}]
    \path [-] (A) edge node {} (B);
    \path [-] (A) edge node {} (C);
    \path [-] (A) edge node {} (D);
    \path [-] (B) edge node {} (E);
    \path [-] (C) edge node {} (F);
    \path [-] (D) edge node {} (G);
    \path [-] (E) edge node {} (H);
    \path [-] (F) edge node {} (H);
    \path [-] (G) edge node {} (H);
\end{scope}
\begin{scope}[style={draw=blue, ultra thick}]
    \draw (7.1,3.2) node[rotate = 90](){\scalebox{2}{$\boldsymbol{\ldots}$}};
    \draw (14.1,3.2) node[rotate = 90](){\scalebox{2}{$\boldsymbol{\ldots}$}};
\end{scope}
\begin{scope}[style={draw=cornflowerblue, ultra thick}]
    \draw[rounded corners=15mm] (-2,4)--(8,9.8)--(8,-1.8)--cycle;
    \draw (8.35,1.5) node[text=cornflowerblue](){$\boldsymbol{A}$};
\end{scope}
\end{tikzpicture}
}
\caption{Illustration of a minimal local set of size $n/2$ in $G$.}
\label{fig:bounds3}
\end{figure}

$A$ is a local set, as $A = D \cup Odd_G(D)$ with $D = \{x\}$.
In particular, $A$ is a minimal local set. Indeed, let us assume that there exists a subset $D'$ such that $D' \cup Odd_G(D') \varsubsetneq A$. It cannot contain any of the $v_i$'s as we would then also have $u_i \in Odd_G(D')$ (since $y \notin D'$). So $D'$ can only contain $x$, leading to a contradiction.
\end{proof}

\section{Minimal local sets in a path}
\label{sec:MLSPath}

Let $P_n$ be a path of order $n>2$ on vertices $v_0, \ldots, v_{n-1}$ such that, for any $i$, there is an edge between $v_i$ and $v_{i+1}$. For any $k\in [0, \lceil n/2\rceil-1)$, let $D_k:=\{v_0,v_2,v_4,\ldots, v_{2k}\}$. The local set $L_k$ generated by $D_k$ is $L_k:=D_k\cup Odd_{P_n}(D_k) = D_k\cup \{v_{2k+1}\}$. We show that $L_k$ is minimal by inclusion: by contradiction, assume there exists a non-empty $D\subseteq L_k$, $D\neq D_k$ and $D\cup Odd_{P_n}(D) \subsetneq L_k$. First notice that $v_{2k+1}\notin D$, otherwise $v_{2k+2}\in Odd_{P_n}(D)$ (notice that $2k+2<n$, so $v_{2k+2}$ is actually a vertex of the path). So $D\subsetneq D_k$, as a consequence there exists  $r<k$  s.t. $v_{2r+1}$ has exactly one neighbor in $D$, implying $v_{2r+1}\in Odd_{P_n}(D)$. This is a contradiction since $v_{2r+1}\notin L$. 

As a consequence, for any $k\in [0, \lceil n/2\rceil-1)$, $L_k$ is a minimal local set of size $k+2$. Notice that we have exhibited minimal local sets of any size from $2$ to $\lceil n/2\rceil$, but this is not a complete description of all minimal local sets of the path, for instance $\{v_1,v_2,v_3\}$ (when $n>4$) and $\{v_0, v_2, v_4, \ldots, v_{n-1}\}$ (when $n=1\bmod 2$) are also minimal local sets.

\section{Complete proof of Proposition \ref{prop:exp_number_MLS}}
\label{app:proof_exp_number_MLS}

\expnumberMLS*

\begin{proof}

Using \cref{prop:cutrank_prop2}, any set of $V$ of size $\lfloor n/2 \rfloor +1$ is not a full-cut-rank set, so it contains at least one minimal local set (see \cref{prop:characMLS} for the link between (minimal) local sets and the cut-rank function). Notice that a minimal local set of size at least $rn$ is contained it at  most $\binom{(1-r)n}{ \lceil n/2 \rceil -1}$ subsets of size  $\lfloor n/2 \rfloor +1$, as one has to choose $\lceil n/2 \rceil -1$ vertices that will not be in such a set, among the $(1-r)n$ ones that are not already in the minimal local set. As a consequence, a counting argument implies:

$$\#\{\text{minimal local sets in } G\} \gs \binom{n}{\lfloor n/2 \rfloor +1}\bigg/\binom{(1-r)n }{\lceil n/2 \rceil -1}$$

\begin{itemize}
\item If $n=2k$, 
\begin{align*}
\#\{\text{minimal local sets in } G\} &\gs \binom{2k}{k+1}\binom{2(1-r)k}{k-1}^{-1}\\
&=\frac{(1-2r)k+1}{k+1} \binom{2k}{k}\binom{2(1-r)k}{k}^{-1}\\
&\gs{(1-2r)} \binom{2k}{k}\binom{2(1-r)k}{k}^{-1}\\
\end{align*}

Now, using the standard inequalities $ \binom{2k}{k} \gs \frac{2^{2k-1}}{\sqrt{\pi k}}$ and $\binom{dk}{k} \leqslant 2^{dkH_2(\frac1{d})}$, where $H_2(x) = -x\log_2(x) - (1-x)\log_2(1-x)$ is the binary entropy, we get:
\begin{align*}
\#\{\text{minimal local sets in } G\} &\gs \frac{1-2r}{2\sqrt{k\pi}} ~2^{2k\left(1-(1-r)H_2\big(\frac{1}{2(1-r)}\big)\right) }\\
&= \frac{1-2r}{\sqrt{2n\pi}}~2^{n\left(1-(1-r)H_2\big(\frac{1}{2(1-r)}\big)\right) }\\
&\gs \frac{1-2r}{3\sqrt{n}}~2^{n\left(1-(1-r)H_2\big(\frac{1}{2(1-r)}\big)\right) }
\end{align*}

\item  If $n=2k-1$, 
\begin{align*}
\#\{\text{minimal local sets in } G\} &\gs \binom{2k-1}{k}\binom{(1-r)(2k-1)}{k-1}^{-1}\\
&=\frac{k}{2k} \!\binom{2k}{k}\!\!\left(\!\frac{k}{(1-r)(2k-1)-k+1}\binom{(1-r)(2k-1)}{k}\!\!\right)^{\!-1}\\
&=\frac{k(1-2r)+r}{2k} \binom{2k}{k}\binom{(1-r)(2k-1)}{k}^{-1}
\end{align*}

We have $\binom{(1-r)(2k-1)}{k}\leqslant 2^{(1-r)(2k-1)H_2(\frac k {(1-r)(2k-1)})}$. 
Since $\frac 12 \leqslant \frac{1}{2(1-r)} \leqslant  \frac{k}{(1-r)(2k-1)}$ 
and $H_2$ is decreasing on 
$[1/2,1]$, we have 
$H_2(\frac k {(1-r)(2k-1)})\leqslant H_2(\frac 1 {2(1-r)})$. 

As a consequence, we get:
\begin{align*}
\#\{\text{minimal local sets in } G\} 
&\gs\frac{k(1-2r)+r}{2k} \binom{2k}{k} 2^{-(1-r)(2k-1)H_2(\frac 1 {2(1-r)})} \\
&\gs \frac{k(1-2r)+r}{2k}~\frac{2^{2k-1}}{\sqrt{\pi k}}~2^{-(1-r)(2k-1)H_2\big(\frac{1}{2(1-r)}\big)}\\
&= \frac{(n+1)(1-2r)+2r}{n+1}~\frac{2^{n}}{\sqrt{2\pi (n+1)}}~2^{-(1-r)nH_2\big(\frac{1}{2(1-r)}\big)}\\
&\gs (1-2r)\frac{2^{n}}{\sqrt{2\pi (n+1)}}~2^{-(1-r)nH_2\big(\frac{1}{2(1-r)}\big)}\\
&\gs\frac{1-2r}{3\sqrt{n}}~2^{n\left(1-(1-r)H_2\big(\frac{1}{2(1-r)}\big)\right) }
\end{align*}
The last inequality is due to fact that $3\sqrt n \gs \sqrt{2\pi(n+1)}$ for any $n>2$. 
\end{itemize}
\end{proof}

\section{Minimal local sets in $K_{k,k} \Delta M_k$}
\label{app:bipartite_matching}

Let $G_k = K_{k,k} \Delta M_k$, of order $2k$. $G_k$ is defined as the  symmetric difference of a complete bipartite graph and a matching. More precisely, we write $G_k = (V_1\cup V_2, E)$ with $V_1=\{u_0, \ldots, u_{k-1}\}$ and $V_2=\{v_0, \ldots, v_{k-1}\}$, such that for any $i$ and $j$, $(u_i,u_j)\notin E$, $(v_i,v_j)\notin E$ and $(u_i,v_j)\in E$ if $i \neq j$. Let $\omega \se [0,k-1]$ be a subset of the integers between 0 and $k-1$ such that $|\omega| = 1 \bmod 2$. The local set $L_w$ generated by $D_\omega = \bigcup_{i\in \omega}\{u_i\}$ is $L_\omega:=D_\omega\cup Odd_{G_k}(D_\omega)=D_\omega\cup \bigcup_{i\in [0,k-1] \sm \omega}\{v_i\}$. We show that $L_\omega$ is minimal by inclusion: let $D \se L_\omega$ such that $D\cup Odd_{G_k}(D) \se L_\omega$, we prove that $D \cup Odd_{G_k}(D)$ is either $\emptyset$ or $L_\omega$.

\begin{itemize}[topsep=0.2em,itemsep=0.2em]
    \item Assume $|D \cap V_1| = 0 \bmod 2$. If $D \cap V_1 \neq \emptyset$, i.e. there exists $u_i \in D \cap V_1$, then $v_i \in Odd_{G_k}(D)$, contradicting $v_i\notin L_\omega$. Thus, $D \cap V_1 = \emptyset$.
    \item Assume $|D \cap V_1| = 1 \bmod 2$. If $D \cap V_1 \neq D_\omega \cap V_1$, i.e. there exists $u_i \in D_\omega \sm D$, then $v_i \in Odd_{G_k}(D)$. Thus, $D \cap V_1 = D_\omega \cap V_1$. 
    \item Assume $|D \cap V_2| = 0 \bmod 2$. If $D \cap V_2 \neq \emptyset$, i.e. there exists $v_i \in D \cap V_2$, then $u_i \in Odd_{G_k}(D)$. Thus, $D \cap V_2 = \emptyset$.
    \item Assume $|D \cap V_2| = 1 \bmod 2$. If $D \cap V_2 \neq D_\omega \cap V_2$, i.e. there exists $v_i \in Odd_{G_k}(D_\omega) \sm D$, then $u_i \in Odd_{G_k}(D)$. Thus, $D \cap V_2 = D_\omega \cap V_2$. 
\end{itemize}

To sum up, $D$ is either $\emptyset$, $D_\omega$, $Odd_{G_k}(D_\omega)$, or $L_\omega$. In any case, $D \cup Odd_{G_k}(D)$ is either $\emptyset$ or $L_\omega$ . As a consequence, for any $\omega \se [0,k-1]$ such that $|\omega| = 1 \bmod 2$, $L_w$ is a minimal local set of size $k$. These minimal local sets are distinct, as $L_\omega \cap V_1 = \omega$. Thus, $G_k$ contains at least $2^{k-1}$ minimal local sets.

\section{Explicit algorithm to find a MLS cover}
\label{app:algorithm}

\begin{algorithm}[H]
    \caption{Find an MLS cover}\label{alg:MLS_cover}
    \SetKwInOut{Input}{Input}
    \SetKwInOut{Output}{Output}
    \Input{A graph $G=(V,E)$.}
    \Output{An MLS Cover $\calM$ for $G$.}
    $\calM \gets \{\}$\;
    \While{some vertex $a \in V$ is not covered, i.e. not in a minimal local set in $\calM$}{
        $A \gets \emptyset$\;    
        \While{$A \cup \{a\}$ is full cut-rank}{
            $B \gets V \sm (A\cup\{a\})$\;
            \While{$|B| > |A|+1$}{
                Find $b \in B$ such that $\cutrk(B \sm \{b\}) \gs |A|+1$\;
                $B \gets B \sm \{b\}$\;
            }
            $A \gets B$\;
        }
        \While{$\exists b \in A$ such that $(A \sm \{b\})\cup\{a\}$ is not full cut-rank}{
            $A \gets A \sm \{b\}$\;
        }
        $\calM \gets \calM\cup\{A \cup \{a\}\}$\;

    }
\end{algorithm}

\section{Minimal local sets have either 1 or 3 generators}
\label{app:1or3}

Minimal local sets have either 1 or 3 generators, and in the latter case, there are of even size. This was proved in \cite{VandenNest05} using the quantum stabilizer formalism. Here we provide a graph-theoretical proof. Recall that $\Delta$ denotes the symmetric difference on vertices. $\Delta$ is commutative and associative, satisfies $A \Delta \emptyset = A$, $A \Delta A = \emptyset$ and $Odd_G(D_0\Delta D_1) =  Odd_G(D_0) \Delta Odd_G(D_1)$.

\begin{proposition}
    Given a minimal local set $L$, only two cases can occur:
    \begin{itemize}[topsep=0.2em,itemsep=0.2em]

        \item $L$ has exactly one generator.
        \item $L$ has exactly three (distinct) generators, of the form $D_0$, $D_1$, and $D_0 \Delta D_1$. This can only occur if $|L|$ is even.
    \end{itemize}
    In other words, $|L|-\cutrk(L)=$ 1 or 2, and the latter can only occur if $|L|$ is even.
\label{prop:MLS2cases}
\end{proposition}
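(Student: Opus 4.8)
The plan is to work entirely with generators, exploiting the linear-algebraic structure that $\Delta$ gives to the set of generators of a fixed local set. Let $L$ be a minimal local set, and let $\mathcal{G} = \{D \subseteq V \mid D \cup \odd{D} = L\}$ be its set of generators. The first step is to understand $\mathcal{G}$ as a coset. If $D_0, D_1 \in \mathcal{G}$, then $D_0 \Delta D_1 \subseteq L$ and, using $\odd{D_0 \Delta D_1} = \odd{D_0} \Delta \odd{D_1} \subseteq L$, the set $K := D_0 \Delta D_1$ satisfies $K \cup \odd{K} \subseteq L$. Since $L$ is minimal by inclusion, either $K \cup \odd{K} = L$ (so $K \in \mathcal{G}$) or $K \cup \odd{K} = \emptyset$, i.e. $K = \emptyset$ and $D_0 = D_1$. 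The key point to extract is that $\mathcal{G}$ is closed under symmetric difference with the \emph{difference} of any two of its elements; equivalently, fixing one generator $D_0$, the set $\{D \Delta D_0 \mid D \in \mathcal{G}\}$ is a subgroup of $(2^L, \Delta)$, so $|\mathcal{G}|$ is a power of two. By \cref{prop:characMLS}, $|\mathcal{G}| = 2^{|L| - \cutrk(L)}$ with $|L| - \cutrk(L) \geq 1$.

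**Ruling out four or more generators.** Next I would show $|\mathcal{G}| \leq 3$, i.e. $|L| - \cutrk(L) \leq 2$. Suppose for contradiction $|L| - \cutrk(L) \geq 2$, so the subgroup $H := \{D \Delta D_0 \mid D \in \mathcal{G}\} \subseteq 2^L$ has dimension at least $2$ over $\FF_2$. Every nonzero element $K \in H$ is itself a generator difference, hence (by the dichotomy above) satisfies $K \cup \odd{K} = L$ and in particular $\odd{K} \subseteq L$ with $K \cup \odd{K} = L$ covering \emph{all} of $L$. The obstruction I expect here is the heart of the argument: if $H$ had dimension $\geq 3$ one must derive a contradiction with minimality. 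The clean way is to count via cut-rank. Each nonzero $K \in H$ lies in the kernel of $\lambda_L$ restricted appropriately, but more usefully, every nonzero $K \in H$ has $K \cup \odd{K} = L$, so no proper nonempty sub-structure arises — yet if $\dim H \geq 3$ we can pick two independent nonzero $K_1, K_2$ and consider a third independent element, forcing a generator whose support is properly contained in $L$ once we analyze the parities. I would make this rigorous by showing that two distinct nonzero elements $K_1, K_2 \in H$ must satisfy $K_1 \cup K_2 \supseteq $ something forcing $K_1 \Delta K_2$ to have $(K_1 \Delta K_2) \cup \odd{K_1 \Delta K_2} \subsetneq L$, contradicting the dichotomy unless the dimension is capped at $2$.

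**The parity constraint.** The remaining task is to prove that the three-generator case forces $|L|$ even. Suppose $\mathcal{G} = \{D_0, D_1, D_0 \Delta D_1\}$, equivalently $\emptyset, K_1, K_2, K_1 \Delta K_2$ are the four elements obtained, but minimality collapses this — so more precisely $H = \{\emptyset, K_1, K_2, K_1\Delta K_2\}$ with $\dim H = 2$, and the three nonzero elements $K_1, K_2, K_1 \Delta K_2$ are exactly the three generators up to the base point $D_0$. For each nonzero $K \in H$ we have $K \cup \odd{K} = L$ with $K \cap \odd{K} = \emptyset$ (a vertex in $D$ is in $\odd{D}$ iff it has an odd number of neighbors in $D$, and one checks the partition). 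Writing $L$ as the disjoint union and tracking, via $\Delta$, how $K_1, K_2, K_1 \Delta K_2$ partition $L$, each vertex of $L$ lies in a specific number of the three nonzero generators; a counting argument on $\sum_{K} |K|$ modulo $2$, combined with $\odd{K_1 \Delta K_2} = \odd{K_1}\Delta\odd{K_2}$, yields $|L| \equiv 0 \pmod 2$. I expect the parity bookkeeping — correctly assigning each vertex of $L$ to one of the cells of the Venn diagram of $K_1, K_2$ and summing cardinalities mod $2$ — to be the most delicate calculational step, though not conceptually hard once the coset structure of $\mathcal{G}$ is in place.
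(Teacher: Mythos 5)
Your proposal has the right starting point but both substantive halves of the proof are missing or broken. First, a correctable slip in the setup: the set $H=\{D\Delta D_0\mid D\in\mathcal G\}$ is \emph{not} a subgroup of $(2^L,\Delta)$ (in the three-generator case it is $\{\emptyset, D_1\Delta D_0, D_1\}$, of size $3$), and $|\mathcal G|$ is not a power of two. The correct structure is that $\mathcal G\cup\{\emptyset\}=\ker\lambda_L$ is an $\FF_2$-subspace, so $|\mathcal G|=2^{|L|-\cutrk(L)}-1$ (the $\emptyset$ must be discounted, since it is in the kernel but generates $\emptyset$, not $L$). More seriously, your second step --- capping the dimension at $2$, i.e.\ ruling out $7$ or more generators --- is not an argument at all: "I would make this rigorous by showing \dots something forcing \dots" never identifies the mechanism. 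This is the heart of the proposition. The paper does it by first proving $L=D_0\cup D_1$ for any two distinct generators and then showing any third generator must equal $D_0\Delta D_1$. (A clean linear-algebra route in the spirit of your sketch does exist: for each $v\in L$ the map $\phi_v:\ker\lambda_L\to\FF_2^2$, $D\mapsto(\mathds{1}[v\in D],\mathds{1}[v\in \odd D])$ is linear and must have trivial kernel because every nonzero $D$ satisfies $D\cup\odd D=L\ni v$; hence $\dim\ker\lambda_L\le 2$. But you did not give this, and some such argument is indispensable.)

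The parity step also fails as written. You assert $K\cap\odd K=\emptyset$ for a generator $K$, which is false: in the single-edge graph on $\{1,2\}$, $L=\{1,2\}$ is a minimal local set with the three generators $\{1\},\{2\},\{1,2\}$, and $\odd{\{1,2\}}=\{1,2\}$, so $K\cap\odd K=K$. Moreover, the bookkeeping you propose cannot succeed on its own: since each $v\in L$ lies in exactly two of the three nonzero elements of the kernel, $\sum_K|K|=2|L|$ is automatically even and carries no information about the parity of $|L|$. The parity of $|L|$ genuinely requires a graph-theoretic input that your sketch never invokes; the paper gets it from the handshake lemma ($\sum_{v\in K}|N_G(v)\cap K|\equiv 0 \bmod 2$) combined with the observation that $v\in D_0\sm D_1$ forces $v\in\odd{D_1}$, hence $|N_G(v)\cap D_1|$ odd. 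Without that ingredient the claim $|L|\equiv 0\pmod 2$ does not follow.
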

    
\begin{proof}
First, let us prove that $L$ has either 1 or 3 generators.

Suppose there exist $D_0$ and $D_1$ ($D_0 \neq D_1$) such that $D_0\cup Odd_G(D_0) = L$ and $D_1\cup Odd_G(D_1) = L$. We know that $D_0 \Delta D_1 \neq \emptyset$. Besides, $(D_0 \Delta D_1)\cup Odd_G(D_0 \Delta D_1) = (D_0 \Delta D_1)\cup (Odd_G(D_0) \Delta Odd_G(D_1)) \se D_0 \cup D_1\cup Odd_G(D_0) \cup Odd_G(D_1) \se L$. And, as $L$ is a minimal local set, $D_0 \Delta D_1$ is a generator of $L$.

We will first prove that $L = D_0 \cup D_1$. 
Suppose that there exists a vertex $v \in L \sm (D_0 \cup D_1)$.
Then $v \in Odd_G(D_0)$ and $v \in Odd_G(D_1)$ so $v \in Odd_G(D_0) \cap Odd_G(D_1)$.
So $v \notin Odd_G(D_0) \Delta Odd_G(D_1) = Odd_G(D_0 \Delta D_1)$.
So, as $(D_0 \Delta D_1)\cup Odd_G(D_0 \Delta D_1) = L$, $v \in D_0 \Delta D_1 \se D_0 \cup D_1$, leading to a contradiction.
    
Now, we will prove that the three subsets, $D_0$, $D_1$, and $D_0 \Delta D_1$, are the only generators of $L$.
Suppose that there exists $D_2 \se L$ such that $D_0 \neq D_2$, $D_1 \neq D_2$, and $D_2 \cup Odd_G(D_2) = L$. Same as before, we can prove that $(D_0 \Delta D_2)\cup Odd_G(D_0 \Delta D_2) = L$ and $(D_1 \Delta D_2)\cup Odd_G(D_1 \Delta D_2) = L$. We also have that $L = D_0 \cup D_2$ and $L = D_1 \cup D_2$. Then $D_0 \cup D_1 = D_0 \cup D_2 = D_1 \cup D_2$, implying $D_1 \sm D_0 \se D_2$ and $D_0 \sm D_1 \se D_2$.
So $D_0 \Delta D_1 = (D_0 \sm D_1) \cup (D_1 \sm D_0) \se D_2$.\\
Suppose there exists $v \in D_2$ such that $v \in L \sm (D_0 \Delta D_1) = D_0 \cap D_1$. Then $v \in D_0 \cap D_2$, so $v \notin D_0 \Delta D_2$, and $v \in Odd_G(D_0 \Delta D_2)$.
For the same reason, $v \in Odd_G(D_1 \Delta D_2)$.
So $v \notin Odd_G(D_0 \Delta D_2) \Delta Odd_G(D_1 \Delta D_2) = Odd_G(D_0 \Delta D_1)$.
So $v \in D_0 \Delta D_1$, leading to a contradiction.
Therefore, we proved that $D_2 = D_0 \Delta D_1$.\\

Second, let us prove that the case where $L$ has 3 generators can only occur when $|L|$ is even.

All calculations are done modulo 2: "$\equiv$" means "is of same parity as".
Suppose that $L$ has three generators  $D_0$, $D_1$ and $D_0 \Delta D_1$. Then $L = D_0 \cup D_1 = \left(D_0 \sm D_1\right) \uplus \left(D_1 \sm D_0\right) \uplus \left(D_0 \cap D_1\right)$.
Given a subset $K$ of $V(G)$ and a vertex $v \in V$, let $\delta_{K}(v)$ denote the degree of $v$ inside $K$, i.e. $\delta_{K}(v) = |N_G(v)\cap K|$. Observe that for any $K \se V$, $\sum_{v\in K}\delta_{K}(v) = 2 \times \text{number of edges in $K$}\equiv 0$.
    
Given $v \in D_0 \sm D_1$, we know that $v \in Odd_G(D_1)$. Then $\delta_{L}(v) \equiv \delta_{D_1}(v) + \delta_{D_0 \sm D_1}(v) \equiv 1 + \delta_{D_0 \sm D_1}(v)$.
So $\sum_{v\in D_0 \sm D_1}\delta_{L}(v) \equiv |D_0 \sm D_1| + \sum_{v\in D_0 \sm D_1}\delta_{D_0 \sm D_1}(v) \\ \equiv |D_0 \sm D_1|$.
Similarly, we have $\sum_{v\in D_1 \sm D_0}\delta_{L}(v) \equiv |D_1 \sm D_0|$ and $\sum_{v\in D_0 \cap D_1}\delta_{L}(v) \\ \equiv |D_0 \cap D_1|$.   
Summing all three cases, we have: $0 \equiv \sum_{v\in L}\delta_{L}(v) \equiv \sum_{v\in D_0 \sm D_1}\delta_{L}(v) \\+ \sum_{v\in D_1 \sm D_0}\delta_{L}(v) + \sum_{v\in D_0 \cap D_1}\delta_{L}(v) \equiv |D_0 \sm D_1| + |D_1 \sm D_0| + |D_0 \cap D_1| \equiv |L|$.
So $|L|$ is even.\\

The interpretation using the cut-rank function is explained in \cref{sec:alt_def}.
\end{proof}

\end{document}